\documentclass[letterpaper, 10 pt, conference]{ieeeconf} 
\IEEEoverridecommandlockouts
\usepackage{cite}
\usepackage{amsmath,amssymb,amsfonts}
\usepackage{algorithm}
\usepackage{algorithmic}
\usepackage{graphicx}
\usepackage{textcomp}
\usepackage{xcolor}
\usepackage{booktabs} 
\usepackage{siunitx} 
\usepackage{multirow}
\usepackage{subcaption}
\usepackage{caption}
\usepackage{threeparttable}
\usepackage{adjustbox}
\usepackage{siunitx}
\usepackage{booktabs}
\def\BibTeX{{\rm B\kern-.05em{\sc i\kern-.025em b}\kern-.08em
    T\kern-.1667em\lower.7ex\hbox{E}\kern-.125emX}}

\usepackage{microtype}
\usepackage{graphicx}
\usepackage{subcaption}
\usepackage{booktabs} 

\usepackage{hyperref}
\usepackage{multirow}

\usepackage{array}

\usepackage{amsmath}
\usepackage{amssymb}
\usepackage{mathtools}
\usepackage{amsthm}
\usepackage{subcaption}

\usepackage[capitalize,noabbrev]{cleveref}

\theoremstyle{plain}
\newtheorem{theorem}{Theorem}[section]

\newtheorem{lemma}[theorem]{Lemma}

\theoremstyle{definition}

\theoremstyle{remark}

\usepackage[textsize=tiny]{todonotes}

\title{\LARGE \bf
Preparation of Papers for IEEE CSS Sponsored Conferences \& Symposia
}

\author{Chengyang Gu, Yuxin Pan, Hui Xiong, and Yize Chen
\thanks{C. Gu and H. Xiong are with Information Hub, HKUST (Guangzhou), Guangzhou, China
        {\tt\small cgu893@connect.hkust-gz.edu.cn, xionghui@hkust-gz.edu.cn }}%
\thanks{Y. Pan is with the Department of Computer Science, City University of Hong Kong, Hong Kong, China
        {\tt\small yuxin.pan@connect.ust.hk}}%
\thanks{Y. Chen is with the Department of Electrical and Computer Engineering, University of Alberta, Edmonton, Canada
        {\tt\small yize.chen@ualberta.ca}}
}

\begin{document}

\title{A Pontryagin Method of Model-based \\Reinforcement Learning via Hamiltonian Actor-Critic}

\maketitle

\begin{abstract}
 Model-based reinforcement learning (MBRL) improves sample efficiency by leveraging learned dynamics models for policy optimization. However, effectiveness of methods such as actor–critic is often limited by compounding model errors that degrade long-horizon value estimation. Existing approaches such as Model-Based Value Expansion (MVE) partially mitigate this issue through multi-step rollouts, but remain sensitive to rollout horizon selection and residual model bias. Motivated by Pontryagin Maximum Principles (PMP), we propose Hamiltonian Actor-Critic (HAC), a model-based approach that eliminates explicit value function learning by directly optimizing a Hamiltonian defined over the learned dynamics and reward for systems with deterministic dynamics. By avoiding value approximation, HAC reduces sensitivity to model errors while admitting convergence guarantees. Extensive experiments on continuous control benchmarks across online and offline RL show that HAC outperforms model-free and MVE-based baselines in control performance, convergence speed, and robustness to distributional shift—including out-of-distribution (OOD) issues. In offline settings with limited data, HAC matches or exceeds state-of-the-art methods, showing strong sample efficiency.
\end{abstract}

\section{Introduction}

Although deep reinforcement learning (DRL) has demonstrated impressive empirical performance \cite{mnih2013playing,lillicrap2015continuous}, fundamental issues related to sample efficiency and generalization continue to pose challenges \cite{yu2018towards}. Model-based reinforcement learning (MBRL) provides one promising perspective to address such issues \cite{sutton1998reinforcement}. By learning an internal dynamics model or representations and training policy agents upon them, the number of required agent-environment interactions are significantly reduced \cite{plaat2023high}.

To ground MBRL for practical decision-making problems, learning accurate dynamics and value functions efficiently emerge as the fundamental research goals. The performance of MBRL is highly contingent on the quality of the model learning component,  particularly in Dyna-style frameworks \cite{sutton1991dyna}, which work by augmenting the replay buffer with imaginary rollouts generated by the learned model \cite{frauenknecht2024trust}. However, learning an accurate dynamics model in complex environments can be challenging. Model errors tend to compound across imagined state transitions \cite{xiao2019learning}, leading to inaccurate value estimates for state-action pairs \cite{nauman2024overestimation}. This problem becomes even more pronounced when using actor-critic architectures \cite{konda1999actor}, where the critic must learn from these imaginary rollouts. Misaligned value estimations can further mislead the policy agent, ultimately degrading control performance \cite{lin2023model}.

To improve quality of target values for critic learning, one effective strategy to mitigate the impact of model and value estimation errors is to to generate multi-step returns via learned models. By leveraging a multi-step lookahead, target critic values can incorporate more long-term information, leading to more accurate and comprehensive value estimates. Classical works using this technique, such as Model-Based Value Expansion (MVE) \cite{feinberg2018model} and its extensions STEVE \cite{buckman2018sample}, AdaMVE \cite{xiao2019learning}, MPPVE \cite{lin2023model} have demonstrated that integrating value expansion into actor-critic algorithms such as DDPG results in significant performance improvements. However, due to the inherent imperfection of learned dynamics models, MVE-based approaches remain vulnerable to compounding errors in the imagined rollouts. These errors can misguide the critic network, making its learning process inaccurate. As a result, the effectiveness of value expansion is highly sensitive to the choice of the \emph{rollout horizon} \cite{wang2020dynamic, georgiev2024adaptive}: longer horizons may introduce greater imaginary rollout bias, produce erroneous target values, and ultimately lead to larger value estimation error for critic learning \cite{palenicek2023diminishing, buckman2018sample}. This further undermines training stability and degrade the agent's overall performance.

While MVE-based methods struggle with cumulative errors in biased imaginary rollouts and the following inaccurate critic learning, recent insights from optimal control theory, a field closely related to MBRL, motivates us to explore a novel paradigm for policy optimization. In particular, the Pontryagin Maximum Principles (PMP) \cite{kopp1962pontryagin, mehta2009q}, which provide necessary conditions for optimal control over a fixed horizon, offer a more principled alternative: optimizing policies directly over trajectories, thereby avoiding the need for learning an explicit value function \cite{jin2020pontryagin, eberhard2024pontryagin}. This raises a compelling question: \textit{What if we also do not need a separate critic approximator in actor-critic MBRL}? Motivated by this idea, we introduce the \textbf{Hamiltonian Actor-Critic (HAC)} algorithm for both online and offline RL, under \textbf{systems with determinisitic dynamics}. During policy learning, instead of relying on potentially inaccurate critic value estimates, HAC directly minimizes the Hamiltonian — a quantity with theoretical ties to optimality. The Hamiltonian is analytically defined once the system dynamics and reward function are parameterized. We also design an analytical Jaco-
bian technique that for HAC, which enables graph-free
and efficient costate computation. This allows HAC to eliminate the need for an explicit critic network during training, yielding significant computational savings while preserving theoretical convergence guarantees. Moreover, we prove that under certain conditions, HAC ensures lower maximum value estimation error than MVE.

We empirically evaluate the proposed Hamiltonian Actor-Critic (HAC) on a diverse set of continuous control benchmarks with increasing complexity, starting from fundamental \textit{Linear Quadratic Regulator (LQR)} task, to \textit{Pendulum}, \textit{MountainCar}, and MuJoCo environments such as \textit{Swimmer} and \textit{Hopper}, under both online and offline reinforcement learning settings. Compared with standard model-free baselines (DDPG, SAC), their model-based extension MVE-DDPG, and state-of-the-art offline RL methods like IQL and MOPO, HAC consistently achieves superior or highly competitive control performance while converging faster, indicating improved sample efficiency. Moreover, HAC exhibits enhanced robustness to out-of-distribution conditions, such as shifted initial states, where MVE-based methods degrade. These empirical results corroborate our theoretical analysis, demonstrating tighter critic estimation error bounds translate into improved performance, sample efficiency, and robustness~\cite{barkley2024stealing}.


\section{Related Works}

\paragraph{Value Estimation in MBRL} Beyond the model-based value expansion (MVE) approach, a range of research efforts have focused on addressing value estimation errors in MBRL. These methods can be broadly categorized into two main strategies. The first involves using ensembles of critic networks \cite{huang2017learning, park2023model, an2021uncertainty,  lyu2022efficient}, where multiple critics are trained in parallel. To improve value estimates, agents either average the outputs of the ensemble to reduce variance \cite{chen2021randomized}, or adopt the minimum value across critics to avoid overestimation \cite{haarnoja2018soft, fujimoto2018addressing}. Despite achieving outstanding results, training an ensemble of models requires significant additional computational resources and time, rendering the method inefficient \cite{nikulin2022q, zheng2023model, duan2021distributional}. Another line of work addresses value estimation errors by penalizing critic values for out-of-distribution (OOD) states based on model uncertainty in state predictions \cite{yu2020mopo, sun2023model, cetin2023learning, luis2023model, wu2022plan}. This is typically achieved through constraints \cite{jeong2022conservative, kumar2019stabilizing} or penalty terms \cite{chen2023conservative, kumar2020conservative} that encourage conservative learning and discourage agents from venturing into poorly modeled regions. While mitigating the impact of overestimated or underestimated state-action values, the heuristic nature of penalties can distort the critic’s value distribution from ground-truth, often resulting in overly conservative or suboptimal policies \cite{parkmodel}.

\paragraph{Sample Efficient Reinforcement Learning} Several lines of research have also focused on improving sample efficiency in reinforcement learning through alternative approaches besides MVE. One direction involves enhancing policy gradient methods to accelerate learning and reduce variance in gradient estimates \cite{zhang2021sample, agarwal2021theory}. Another promising avenue involves simplifying function approximators, including dynamics models and critics. Recent studies have shown that using simple linear approximations \cite{jin2023provably, ding2021provably} or sparse neural networks \cite{liu2023model, hiraoka2021dropout} can substantially enhance both sample and computational efficiency with minimal losses in performance. These findings suggest a potential model surplus in current actor-critic architectures, where overly complex models may not be necessary for effective learning.

\paragraph{Learning Optimal Policies} Recent advances in the optimization field have leveraged deep learning to solve optimization problems by embedding them directly into neural network architectures \cite{amos2017optnet}. These approaches treat optimization problems as differentiable layers, allowing their objectives to be incorporated as loss functions and enabling end-to-end learning \cite{kong2022end, de2018end, donti2017task, chenreddy2024end}. This paradigm has been successfully extended to optimal control problems through the use of Pontryagin’s Maximum Principles. Notable examples include Pontryagin Differential Programming (PDP) \cite{jin2020pontryagin} and AI Pontryagin \cite{bottcher2022ai}, which demonstrate strong performance in structured decision-making tasks. However, these methods rely entirely on offline datasets and thus operate under a different setting from our work, which considers interactive Dyna-style model-based reinforcement learning.

\section{Preliminary}
\label{Pre}
\noindent\paragraph{Reinforcement Learning} We consider a deterministic Markov Decision Process (MDP) \cite{sutton1998reinforcement, puterman2014markov} which is defined by a tuple $\mathcal{M} = (\mathcal{S}, \mathcal{A}, f, r, \gamma)$, where $\mathcal{S} \subseteq \mathbb{R}^m$ and $\mathcal{A} \subseteq \mathbb{R}^n$ denote continuous state and action space respectively~\footnote{See our method's stochastic extension in Appendix \ref{app::stochastic}.}. $f: \mathcal{S} \times \mathcal{A} \rightarrow \mathcal{S}$ denotes the deterministic transition dynamics of the system. $r: \mathcal{S} \times \mathcal{A} \rightarrow \mathbb{R}$ is the reward function, and $\gamma \in (0, 1]$ is the discount factor that balances immediate and future rewards. $r_T$ denotes the terminal reward function for final state $s_T$. We assume we known $r$ and $r_T$. The RL objective is to learn a policy $\pi_{\theta}(a|s)$ parameterized by $\theta$, to maximize the cumulative reward alongside a discrete-time horizon with finite maximum timestep $T$:
\begin{align}
\max_{\theta} \quad & J(\theta) = \sum_{t=0}^{T-1} \gamma^t r(s_t, a_t) + r_T(s_T)\nonumber \\
\text{s.t.} \quad & s_{t+1} = f(s_t, a_t), \quad a_t = \pi_{\theta}(s_t) .
\label{eq:RL}
\end{align}

\noindent \paragraph{Actor-Critic} In this work, we focus on the inaccurate value estimation issue in Actor-Critic \cite{konda1999actor, silver2014deterministic} RL algorithms such as Deep Deterministic Policy Gradients (DDPG) \cite{lillicrap2015continuous} and Soft Actor-Critic (SAC) \cite{haarnoja2018soft}. In the actor-critic setting, with Q-value defined as $Q^\pi (s,a) = r(s,a) +\gamma V^\pi (s')$, not only the policy agent $\pi$, but the critic value function (i.e. Q-value) $Q_{\phi}(s_t, a_t)$ is also represented by parameterized function approximators such as neural networks. The critic network parameter $\phi$ is updated by minimizing the temporal difference:
\begin{equation}
    \min_{\phi} \quad  J(\phi) = (Q_{\phi}(s_t, a_t) - r(s_t, a_t) - \gamma Q_{\phi_{'}}(s_{t+1}, a_{t+1}))^2;
    \label{eq:standardQ}
\end{equation}
where $Q_{\phi_{'}}$ is the target network. Based on the learned critic network, the trainable agent $\pi_{\theta}$
performs the policy gradient update to maximize the estimated value of the chosen state-action pair:
\begin{equation}
    \nabla_{\theta} J(\theta) = \nabla_{\theta} \pi_{\theta}(s_t) Q_{\phi}(s_t, a_t) |_{a_t=\pi_{\theta}(s_t)}.
    \label{eq:policygradient}
\end{equation}
While this architecture enables efficient and stable learning, it often suffers from value overestimation issues \cite{fujimoto2018addressing}. In actor-critic methods, inaccurate or overly optimistic value estimates can compound over time due to bootstrapped updates from a learned critic function approximator. This issue becomes even more severe in Dyna-style model-based RL, where an approximated dynamics model $\hat{f}$ generates imaginary trajectories. If $\hat{f}$ is flawed, it produces unrealistic state transitions and incorrect reward predictions, further biasing the value targets. As a result, the actor is misled into exploiting overestimated returns, which degrades policy quality. This creates a self-reinforcing feedback loop: the actor explores more unrealistic regions, the model performs worse in those regions, the critic becomes more biased, and the actor continues to drift.



\paragraph{Model-based Value Expansion (MVE)} In this work, we compare our method with Model-based Value Expansion (MVE), which focuses on computing multi-step critic targets for critic network training using model-based imaginary rollouts \cite{palenicek2023diminishing, buckman2018sample, lin2023model}. In MVE, to improve long-term value estimation, the standard one-step Bellman target in temporal-difference learning (Eq. \eqref{eq:standardQ}) can be replaced by a $K$-step \textbf{critic expansion} that leverages rollouts from the learned dynamics model \cite{feinberg2018model}. This approach computes a multi-step return as the training target for the critic, leading to more accurate value estimates:
\begin{equation*}
    \begin{aligned}
        \min_{\phi} \quad  J(\phi) = (&Q_{\phi}(s_t, a_t) - \sum_{t=1}^{K-1}\gamma^{t-1} 
    r(s_t, a_t) \\ 
    &- \gamma^{K} Q_{\phi_{'}}(s_{K}, a_{K}))^2;
    \end{aligned}
    \label{eq:CriticExpansion}
\end{equation*}
    which is served as an alternative target for critic value learning. Empirically, such improved targets have been proved to accelerate learning and enhance performance, particularly in tasks with dense rewards. However, since this approach still relies on both a trained critic network and a learned dynamics model, it remains sensitive to model error, which can accumulate over longer imaginary rollouts. Moreover, the choice of the rollout horizon $K$  becomes critical. Tasks with varying dynamics complexity require careful tuning of $K$ to balance bias and variance in the target estimates \cite{xiao2019learning}. To address this, subsequent works have proposed adaptive horizon selection based on model confidence, but these techniques introduce additional complexity and overhead into the training pipeline .



\paragraph{Pontryagin Maximum Principles (PMP)} Our HAC is grounded in \textit{Pontryagin Maximum Principles (PMP)}. Given access to a learned dynamics model, the reinforcement learning (RL) problem in Eq. \eqref{eq:RL} can be reformulated as an optimal control problem:
\begin{align}
\min_{\theta} \quad & J(\theta) = \sum_{t=0}^{T-1} c(s_t, a_t) + \Phi_T(s_T)\nonumber \\
\text{s.t.} \quad & s_{t+1} = f(s_t, a_t), \quad a_t = \pi_{\theta}(s_t) .
\label{eq:optimization}
\end{align}
where the \textit{running cost} is defined as $c(s_t, a_t) = -\gamma^{t} r(s_t, a_t)$, and the \textit{terminal cost} is defined by $\Phi_T(s_T) = - r_T(s_T)$. To solve this reformulated optimal control problem, the PMP introduces the \textit{Hamiltonian}, a new objective for minimization, which is defined as:
\begin{equation}
    H(s_t, a_t, \lambda_{t+1}) = c(s_t, a_t) + \lambda_{t+1}^{T} f(s_t, a_t).
    \label{eq:Hamiltonian}
\end{equation}
where $\lambda_{t}$ is defined as \textit{costate}, computed via backward recursion grounded on PMP conditions:
\begin{align}
   \lambda_t  &= \nabla_{s_t} H(s_t, a_t, \lambda_{t+1}) \\ \notag
   &= \nabla_{s_t} c(s_t, a_t) + \nabla_{s_t}f(s_t, a_t)\lambda_{t+1}; \\
 &\lambda_T = \nabla_{s_T} \Phi_T(s_T).
\label{eq:costate}
\end{align}
From Eq. \eqref{eq:Hamiltonian} and \eqref{eq:costate}, it naturally follows that the Hamiltonian $H(s_t, a_t, \lambda_{t+1})$ can serve as a \textbf{direct, equivalent replacement for critics} $Q_{\phi}(s_t, a_t)$ in Eq. \eqref{eq:policygradient}, provided the dynamics model $f$ is known or learned. This formulation eliminates the need to train a separate critic network to approximate value functions, thereby simplifying the architecture and mitigating the progressive build-up of errors commonly seen in model-based actor-critic frameworks.

\section{Hamiltonian Actor-Critic Approach}
\label{Method}
Building on the insight that the Hamiltonian from Pontryagin’s Maximum Principles (PMP) can serve as a dynamics-aware surrogate for the critic in MBRL, we propose a novel method called \textbf{Hamiltonian Actor-Critic (HAC)}. HAC replaces the separate critic networks in traditional actor-critic frameworks with a Hamiltonian-based formulation, integrated with $K$-step imaginary rollouts. We begin by detailing the HAC algorithm, followed by an analysis of its convergence properties. Finally, we provide the theoretical guarantee that under certain conditions, HAC yields a smaller maximum critic estimation error compared to the classical multi-step critic expansion method, Model-based Value Expansion (MVE).

\subsection{HAC Training}
Here we present the detailed implementation of the HAC algorithm, demonstrating how the Hamiltonian rooted in PMP can be seamlessly incorporated into the model-based actor-critic framework. 

Our HAC is built upon a parameterized policy agent $\pi_{\theta}(s_t)$, and a pre-trained dynamics model $\hat{f}$. Recall from Section \ref{Pre} that our goal is to compute the Hamiltonians along a 
$K$-step ahead imaginary rollout, and use them as negative critic values to minimize for each state-action pair. To generate these rollouts, we rely on the current policy agent $\pi_{\theta}$ and the approximated dynamics $\hat{f}$. Specifically, the imaginary rollout is constructed as follows: $\hat{s}_{t+1}=\hat{f}(\hat{s}_{t}, \hat{a}_t)$, where $\{\hat{s}_{1:K} \}$ yielding the sequence of simulated states and $\hat{a}_t = \pi_{\theta}(\hat{s}_t)$ are actions based on the policy. Given these rollouts, we compute the corresponding costates $\hat{\lambda}_{1:K-1}$ using the backward recursion derived from Eq. \eqref{eq:costate}: $\hat{\lambda}_t=-\nabla_{\hat{s}_t} \gamma^{t} r(\hat{s}_t, \hat{a}_t) + \nabla_{\hat{s}_t}\hat{f}(\hat{s}_t, \hat{a}_t)\hat{\lambda}_{t+1}, \quad \text{for} \quad t=K-1, ..., 1$, with the terminal costate $\hat{\lambda}_K$ initialized as: $\hat{\lambda}_K =
\begin{cases}
\nabla_{\hat{s}_K} r_T(\hat{s}_K), & \text{if } r_T \text{ is defined}, \\
0, & \text{otherwise}.
\end{cases}$

Using the computed simulated costates and the known reward functions $r$ and $r_T$, we establish Hamiltonian for $H(\hat{s}_{t}, \hat{a}_t, \hat{\lambda}_{t+1})$ for each state-action pair $(\hat{s}_{t}, \hat{a}_t)$, treating it as a negative critic value. Our objective is to adjust the policy agent so that the selected action $\hat{a}_t$ minimizes the corresponding Hamiltonian. Accordingly, the actor network's loss is defined as:
\begin{equation}
    L_{actor}(\theta) = \frac{1}{K}\sum_{t=0}^{K-1}  H(\hat{s}_{t}, \hat{a}_t, \hat{\lambda}_{t+1}), \quad \hat{a}_t = \pi_{\theta}(\hat{s}_t).
    \label{eq: actorloss}
\end{equation}

Following the standard actor-critic settings, we incorporate a target actor network $\pi_{\theta'}$ and update the primary actor $\pi_\theta$ using a soft update strategy with a decay factor. The overall procedure of the HAC algorithm is outlined in Algorithm \ref{alg:hac}.

One interesting fact is HAC framework is applicable to both online and offline RL settings. Unlike conventional actor–critic methods, which often struggle in offline RL due to unreliable critic estimates for unseen actions outside the limited collected trials, HAC eliminates the need for explicit critic learning. Instead, the optimization objective in Eq.~\eqref{eq: actorloss} is solely defined  by the pre-trained dynamics model, making HAC less sensitive to distributional shift in offline data. 

\begin{algorithm}[tb]
  \caption{Hamiltonian Actor-Critic (HAC)}
  \label{alg:hac}
  \begin{algorithmic}
    \STATE {\bfseries Input:} Parameterized policy $\pi_\theta$, target policy $\pi_{\theta'}$, pre-trained dynamics model $\hat{f}$, reward functions $r$, $r_T$, rollout horizon $K$, soft update factor $\tau$, max iterations number $N$, learning rate $\eta$, replay buffer $D$.
    \REPEAT
    \STATE (Online) Collect transitions $(s_t, a_t, s_{t+1})$ with $\pi_{\theta}$, add to replay buffer $D$.
    \STATE Fit dynamics $\hat{f}$:  \\$\hat{f} = \text{argmin}_{(s_t, a_t, s_{t+1}) \in D} || \hat{f}(s_t, a_t) - s_{t+1}||^2$.
    \FOR{$\text{iters}=1$ to $N$}
    \STATE Sample initial state $\hat{s}_0$
    \FOR{$t = 0$ to $K-1$}
        \STATE $\hat{a}_t \gets \pi_\theta(\hat{s}_t)$; $\hat{s}_{t+1} \gets \hat{f}(\hat{s}_t, \hat{a}_t)$
    \ENDFOR
    \IF{$r_T$ is defined}
    \STATE $\hat{\lambda}_K \gets \nabla_{\hat{s}_K} r_T(\hat{s}_K)$
    \ELSE
    \STATE $\hat{\lambda}_K \gets 0$
    \ENDIF
    \FOR{$t = K-1$ to $1$}
        \STATE $\hat{a}_t \gets \pi_\theta(\hat{s}_t)$
        \STATE $\hat{\lambda}_t \gets -\nabla_{\hat{s}_t} \gamma^t r(\hat{s}_t, \hat{a}_t) + \nabla_{\hat{s}_t} \hat{f}(\hat{s}_t, \hat{a}_t) \hat{\lambda}_{t+1}$
    \ENDFOR

    \STATE Compute Hamiltonian: $H(\hat{s}_t, \hat{a}_t, \hat{\lambda}_{t+1})$ for $\forall t$ 
    \STATE Compute actor loss:\\  $L_{actor}(\theta) = \frac{1}{K} \sum_{t=0}^{K-1} H(\hat{s}_t, \hat{a}_t, \hat{\lambda}_{t+1})$
    \STATE Update $\theta$ using gradient descent on $L_{actor}$:\\ $\theta \leftarrow \theta - \eta\nabla_\theta{L_{actor}(\theta)}$
    \STATE Soft update target policy: $\theta' \gets \tau \theta + (1 - \tau) \theta'$
    \ENDFOR
    \UNTIL{Converged}
  \end{algorithmic}
\end{algorithm}

\subsection{Technique: Analytical Jacobian}
In the Hamiltonian Actor--Critic (HAC) algorithm, the policy is updated by optimizing the estimated Hamiltonian $H(\hat{s}_t, \hat{a}_t, \hat{\lambda}_{t+1})$ over a $K$-step rollout. Computation of the costate requires the Jacobian of the pre-trained dynamics, 
\begin{equation}
\hat{\lambda}_t
=
- \nabla_{\hat{s}_t} \gamma^t r(\hat{s}_t, \hat{a}_t)
+
\left( \nabla_{\hat{s}_t} \hat{f}(\hat{s}_t, \hat{a}_t) \right)^\top
\hat{\lambda}_{t+1}.
\end{equation}
However, directly computing $\nabla_{\hat{s}_t} \hat{f}$ using automatic differentiation in PyTorch presents two major challenges: First, the Hamiltonian gradient requires reuse of the computational graph, resulting in RunTime Errors as costate computation back-propagates through the same graph. Second, the costate recursion demands Jacobian at every time step, and autograd-based computation incurs repeated higher-order gradient operations, substantially increasing computational cost and slowing down training, in either offline or online modes. Such an issue is also encountered in several learning to optimize and implicit differentiation tasks~\cite{agrawal2019differentiable, bolte2021nonsmooth}.

\vspace{-5pt}To address these issues, we develop an \textbf{Analytical Jacobian Technique} that for HAC, which enables graph-free and efficient costate computation leveraging the preserved parameters of the pre-trained dynamics model.

Throughout this paper, we model the dynamics using a neural network with ReLU activations, which is shown to be effective in our simulations: 
\begin{align*}\vspace{-15pt}
z_t &= \begin{bmatrix} s_t , a_t \end{bmatrix}^T, \\
h_1 &= W_1 z_t + b_1, \quad c_1 = \sigma(h_1), \\
h_2 &= W_2 c_1 + b_2, \quad c_2 = \sigma(h_2), \\
s_{t+1} &= \hat{f}_\theta(s_t, a_t) = W_3 c_2 + b_3,
\end{align*}
where $\sigma(\cdot) = \mathrm{ReLU}(\cdot)$ and $W_1^s$ denotes the submatrix of $W_1$ corresponding to the state input.

Since $\sigma'(h) = \mathbb{I}[h > 0]$, we can define and identify diagonal activation matrices
\begin{equation}
D_1 = \mathrm{diag}\big( \mathbb{I}[h_1 > 0] \big), \qquad
D_2 = \mathrm{diag}\big( \mathbb{I}[h_2 > 0] \big).
\end{equation}
By applying the chain rule, the Jacobian of the neural dynamics with respect to the state admits the following closed-form expression
\begin{equation}
\label{eq:analytical_jacobian}
\nabla_{s_t} \hat{f}(s_t, a_t) 
=
W_3 D_2 W_2 D_1 W_1^s.
\end{equation}
For batched inputs, the activation matrices are computed independently for each sample, enabling efficient GPU implementation using only matrix multiplications and element-wise masking.

\subsection{Convergence of HAC Learning}
To ensure the HAC scheme could improve policy learning, we show that by iteratively updating the policy agent $\pi_\theta$ using the gradient $\nabla_\theta{L_{actor}(\theta)}$, the HAC algorithm is guaranteed to converge to a local optimum.

\begin{lemma}
\label{lemma::1}
    In the actor network learning objective Eq. \eqref{eq:optimization}, updating $a_t$ with $a_t \leftarrow a_t - \eta \nabla_{a_t} H(s_t, a_t, \lambda_{t+1})$ is equivalent to taking gradient descent of $J = \sum_{t=0}^{T-1} c(s_t, a_t) + \Phi_T(s_T)$ with respect to $a_t$ : $a_t \leftarrow a_t - \eta \nabla_{a_t} J$. 
\end{lemma}  

See Proof in Appendix \ref{proofL1}.  Since $a_t = \pi_{\theta}(s_t)$, with \textit{Lemma} \ref{lemma::1} and the chain rule, we can show the following results on the  convergence of HAC:

\begin{theorem}
\label{theorem::Convergence}
    (Convergence of HAC) Suppose the following hold: (i) $J$ is differentiable  and Lipschitz continuous with constant $L_m$ w.r.t. $a_t$; (ii) $\nabla_{a_t} J$ is Lipschitz continuous with contant $L_J$; (iii) $\pi_{\theta}$ is differentiable  and Lipschitz continuous with constant $L_G$ w.r.t. $\theta$; (iv) $\nabla_{\theta} \pi_{\theta}$ is $L_p$-Lipschitz continuous. The policy agent updating $\theta \leftarrow \theta - \eta L_{actor}(\theta)$ in HAC guarantees convergence, with learning rate $\eta < 2/(L_J L_G^2+L_pL_m)$.
\end{theorem}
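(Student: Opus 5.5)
The plan is to reduce the HAC update to ordinary gradient descent on the composite objective $\theta \mapsto J(\pi_\theta)$, and then apply the standard descent lemma for $L$-smooth functions. The first step uses Lemma~\ref{lemma::1}. With the costates fixed by the backward recursion in Eq.~\eqref{eq:costate}, we have $\nabla_{a_t} H(s_t,a_t,\lambda_{t+1}) = \nabla_{a_t} J$. Since $a_t=\pi_\theta(s_t)$, the chain rule then gives
\begin{equation*}
\nabla_\theta L_{actor}(\theta) \;\propto\; \sum_t \nabla_\theta \pi_\theta(s_t)^\top \nabla_{a_t} H = \sum_t \nabla_\theta \pi_\theta(s_t)^\top \nabla_{a_t} J = \nabla_\theta J(\theta).
\end{equation*}
The constant $1/K$ is absorbed into $\eta$, and we read the update in the statement as $\theta\leftarrow\theta-\eta\nabla_\theta L_{actor}(\theta)$. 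After this step, the HAC iteration is exactly gradient descent on $J(\theta)$.

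The second step shows that $\theta\mapsto\nabla_\theta J$ is Lipschitz with constant $L := L_J L_G^2 + L_p L_m$. Write $g(\theta)=\nabla_\theta\pi_\theta$ and $G(a)=\nabla_a J$. For two parameters $\theta_1,\theta_2$ we split
\begin{equation*}
g(\theta_1)^\top G(a(\theta_1)) - g(\theta_2)^\top G(a(\theta_2)) = g(\theta_1)^\top\bigl(G(a(\theta_1))-G(a(\theta_2))\bigr) + \bigl(g(\theta_1)-g(\theta_2)\bigr)^\top G(a(\theta_2)).
\end{equation*}
The first term is bounded using (iii) twice. We have $\|g\|\le L_G$, and $\|a(\theta_1)-a(\theta_2)\|\le L_G\|\theta_1-\theta_2\|$; combined with (ii) this gives $L_G\cdot L_J\cdot L_G$. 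The second term is bounded using (iv) for $\|g(\theta_1)-g(\theta_2)\|\le L_p\|\theta_1-\theta_2\|$, and (i) for $\|G\|\le L_m$. Together these give the constant $L$.

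The third step is the descent lemma:
\begin{equation*}
J(\theta_{k+1}) \le J(\theta_k) - \eta\bigl(1-\tfrac{L\eta}{2}\bigr)\|\nabla_\theta J(\theta_k)\|^2.
\end{equation*}
For $\eta<2/L$ the coefficient is strictly positive, so $J$ is monotonically non-increasing. Summing over iterations and using that $J$ is bounded below gives $\sum_k\|\nabla J(\theta_k)\|^2<\infty$. Hence $\nabla_\theta J(\theta_k)\to 0$, which is convergence to a stationary point (local optimum). Boundedness below follows if the rewards are bounded over the finite horizon $T$, and I would add this as an implicit assumption.

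The main obstacle is the second step. The states $s_t$ themselves depend on $\theta$ through the rollout, so treating $a_t=\pi_\theta(s_t)$ as depending on $\theta$ only through the policy's explicit argument is delicate. I would handle this by interpreting the hypotheses (i)–(iv) as being stated for the composite maps along the rollout, i.e.\ $L_G$ and $L_p$ bound the total derivative of $a_t$ in $\theta$. Lemma~\ref{lemma::1} is what makes this consistent, because $\nabla_{a_t}J$ already includes the downstream effect on future states through $\lambda_{t+1}$. A secondary care point is the sum over $t$: summing the per-step bounds would multiply $L$ by $K$ (or $T$), and the averaging factor $1/K$ in $L_{actor}$ cancels this.
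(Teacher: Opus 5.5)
Your proposal is correct at the paper's level of rigor and follows essentially the same route. It uses Lemma~\ref{lemma::1} plus the chain rule to identify $\nabla_\theta L_{actor}$ with an average of gradients of $J$, the same add-and-subtract bound giving $L_J L_G^2 + L_p L_m$, and then the descent lemma, telescoping, and a lower bound to conclude $\|\nabla_\theta L_{actor}(\theta_i)\|\to 0$, where for the lower bound the paper simply assumes a minimizer $\theta^*$ of $L_{actor}$ exists. The differences are small refinements the paper passes over silently: you apply the descent lemma to $J$, the function whose gradient is actually followed, rather than to $L_{actor}$; and you explicitly flag the $\theta$-dependence of the rollout states and the $1/K$ averaging, both of which the paper treats implicitly in the way you propose.
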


\textit{Proof:} See Proof in Appendix \ref{proofT1}. 

Since neural networks with standard activations (e.g. ReLU, Sigmoid, etc.) are commonly used to parameterize for both $\pi_{\theta}$ and $\hat{f}$, the assumptions in \textit{Theorem}~\ref{theorem::Convergence} are mild and readily satisfied under standard conditions.

\subsection{Tighter Bound of Critic Estimation Error}
In this section, we further analyze the conditions under which our HAC algorithm achieves a smaller maximum critic estimation error compared to MVE, or analytical tractability, we focus on widely used assumptions: the step reward $r(s_t, a_t)$ is quadratic, and terminal reward $r_T$ is either constant, linear or quadratic. Let $f$ denote the ground-truth dynamics. 

We define the critic estimation error in HAC as the difference between the estimated Hamiltonian $H(\hat{s}_t, \hat{a}_t, \hat{\lambda}_{t+1})$ and the ground-truth Hamiltonian $H_g(s_t,a_t, \lambda_{t+1})$: $|H-H_g|$, where $H$ is computed from $K$-step rollouts using the learned dynamics $\hat{f}$, and $H_g$ is derived from rollouts under the ground-truth dynamics $f$. Similarly, we define the critic value estimation error in MVE as the difference between the estimated Q value $Q_{\phi}(s_t, a_t)$ and the ground-truth Q value $Q(s_t, a_t)$: $| Q_{\phi}(s_t, a_t) - Q(s_t, a_t) |$, both of which are trained using $K$-step target values based on $\hat{f}$ and $f$ respectively, as described in Eq. \eqref{eq:CriticExpansion}.

To compare the critic estimation errors of HAC and MVE, we first address the main challenge: analyzing the Hamiltonian error in HAC. This difficulty arises from the recursive nature of the costate computation in Eq. \eqref{eq:costate}, which makes it challenging to accurately approximate the Hamiltonian at each time step. Drawing inspiration from Hamiltonian dynamics in mechanical systems, we observe that when the Hamiltonian lacks explicit time dependence, system with a symplectic structure keeps the Hamiltonian constant over time \cite{greydanus2019hamiltonian}. Notably, this symplectic structure is also present in optimal control problems, when Pontryagin’s Maximum Principle (PMP) is applied to define the costates and Hamiltonian. Hence, we have a similar Hamiltonian conservation law in the continuous-time formulation of optimal control problem: 
\begin{lemma} 
\label{lemma:conservative}
    (Hamiltonian Conservation Law in Optimal Control) The Hamiltonian of the continuous-time optimal control problem:  $\int_{t=0}^T r(s(t), a(t))+r_T(s(T)), \text{s.t.}\;  \dot{s} = f(s(t), a(t))$, is constant across time when applying the assumed optimal $a(t)$ from PMP.
\end{lemma}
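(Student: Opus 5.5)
We prove the lemma by direct differentiation of the continuous-time Hamiltonian along the PMP-optimal trajectory. First we fix the setting, mirroring the discrete formulation of Section~\ref{Pre}. Take running cost $c(s,a) = -r(s,a)$, which has no explicit time dependence since the integrand carries no discount. Take terminal cost $\Phi_T = -r_T$ and dynamics $\dot s = f(s,a)$, which are also autonomous. Define
\begin{equation*}
H(s,a,\lambda) = c(s,a) + \lambda^{T} f(s,a).
\end{equation*}
The continuous-time PMP conditions, the analogues of Eq.~\eqref{eq:costate}, are:
\begin{itemize}
\item the state equation $\dot s = \nabla_\lambda H = f(s,a)$;
\item the costate equation $\dot\lambda = -\nabla_s H$, with $\lambda(T) = \nabla_s \Phi_T(s(T))$;
\item the optimality condition that $a^*(t)$ minimizes $H(s^*(t),a,\lambda(t))$ over $\mathcal{A}$.
\end{itemize}
We assume $f$ and $c$ are $C^1$ and that $a^*(t)$ is differentiable, or at least absolutely continuous, in $t$.

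\textbf{Interior minimizer.} Apply the chain rule to $h(t) := H(s^*(t),a^*(t),\lambda(t))$:
\begin{equation*}
\dot h = \nabla_s H^{T}\dot s + \nabla_a H^{T}\dot a + \nabla_\lambda H^{T}\dot\lambda + \partial_t H .
\end{equation*}
The term $\partial_t H$ vanishes because nothing depends on $t$ explicitly. Substituting $\dot s = \nabla_\lambda H$ and $\dot\lambda = -\nabla_s H$, the first and third terms cancel; this is the symplectic structure. The middle term vanishes because $\nabla_a H = 0$ at an interior minimizer. Hence $\dot h = 0$ and $H$ is constant on $[0,T]$.

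\textbf{Main obstacle: boundary minimizers and nonsmooth controls.} When $a^*$ lies on the boundary of $\mathcal{A}$, or is only measurable (e.g.\ bang-bang), $\nabla_a H = 0$ need not hold and $\dot a$ may not exist. We plan to use the standard envelope/Danskin argument instead. Define
\begin{equation*}
m(t) := \min_{a\in\mathcal{A}} H(s^*(t),a,\lambda(t)) = h(t).
\end{equation*}
For $t_1 < t_2$, minimality gives
\begin{equation*}
m(t_2) - m(t_1) \le H(s^*(t_2),a^*(t_1),\lambda(t_2)) - H(s^*(t_1),a^*(t_1),\lambda(t_1)).
\end{equation*}
The control is frozen on the right-hand side, so the difference is governed only by the smooth $(s,\lambda)$-dependence. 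Swapping the roles of $t_1$ and $t_2$ gives the reverse inequality. Together these show $m$ is Lipschitz, with derivative almost everywhere equal to $\nabla_s H^{T}\dot s + \nabla_\lambda H^{T}\dot\lambda = 0$. An absolutely continuous function with zero derivative almost everywhere is constant.

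If this measure-theoretic step becomes too heavy, we will state the lemma under the smooth interior-optimum assumption. That case is the one used later to analyze the quadratic-reward setting.
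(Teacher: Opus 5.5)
Your proposal is correct. Its interior-minimizer case is exactly the paper's proof: differentiate $H$ along the trajectory, cancel the $s$- and $\lambda$-terms using $\dot s = \nabla_\lambda H$ and $\dot\lambda = -\nabla_s H$, and remove the $a$-term with $\nabla_a H = 0$.

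You go beyond the paper in two useful ways.

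First, you account explicitly for $\partial_t H$, which the paper's three-term chain rule drops without comment. This is exactly where the absence of discounting in the continuous-time problem matters. A weight such as $e^{-\rho t}$ (the analogue of the $\gamma^t$ in the discrete running cost of Section 3) would give $\dot h = \partial_t H \neq 0$.

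Second, your frozen-control (Danskin) argument covers constrained or bang-bang controls. There the paper's step ``$a(t)$ is optimal, hence $\partial H/\partial a = 0$'' fails, and $\dot a$ need not exist. To make this step airtight:
\begin{itemize}
\item State that $\mathcal{A}$ is compact (or that the optimal control is bounded), so that $H$ is Lipschitz in $(s,\lambda)$ uniformly in $a$. This gives directly $|m(t_2)-m(t_1)| \le \sup_{a\in\mathcal{A}} |H(s^*(t_2),a,\lambda(t_2)) - H(s^*(t_1),a,\lambda(t_1))|$, with no appeal to $a^*$.
\item Identify $\dot m(t)$ at almost every $t$ as follows. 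The function $\tau \mapsto H(s^*(\tau),a^*(t),\lambda(\tau)) - m(\tau)$ is nonnegative and vanishes at $\tau = t$, so its derivative there is zero. This yields $\dot m(t) = \nabla_s H^{T}\dot s(t) + \nabla_\lambda H^{T}\dot\lambda(t)$, which is $0$ wherever the state and costate equations hold.
\item Note that for measurable controls the conclusion is that $m$ is constant on $[0,T]$, while $h$ is constant only almost everywhere (where the minimum condition holds).
\end{itemize}

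The paper's route is shorter but is valid only for smooth interior optima. Yours proves the standard general form of the result.
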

\begin{proof}
Our proof is motivated by \cite{bourdin2019continuity}. In continuous-time PMP, Hamiltonian is defined as $H=r(s(t), u(t))+\lambda(t)f(s(t), a(t))$. The derivative $\frac{\partial H}{\partial t} = \frac{\partial H}{\partial s} \frac{\partial s}{\partial t} + \frac{\partial H}{\partial a} \frac{\partial a}{\partial t} + \frac{\partial H}{\partial  \lambda} \frac{\partial \lambda}{\partial t}$. Easy to find that $\frac{\partial s}{\partial t} =  \frac{\partial H}{\partial \lambda}$. With costate backward recursive $-\frac{\partial \lambda}{\partial t} = \frac{\partial H}{\partial s}$ in PMP, $\frac{\partial H}{\partial s} \frac{\partial s}{\partial t} +\frac{\partial H}{\partial  \lambda} \frac{\partial \lambda}{\partial t} = 0$. Since $a(t)$ is assumed optimal, $\frac{\partial H}{\partial a} = 0$. Hence, $\frac{\partial H}{\partial t} = 0$. 
\end{proof}

The Hamiltonian Conservation Law stated in \textit{Lemma} \ref{lemma:conservative} also holds in the discrete-time optimal control setting Eq. \eqref{eq:optimization} when using midpoint discretization~\cite{bijalwan2023control}. Leveraging this conservation property, we avoid accumulating compounding errors across the rollout horizon during forward simulation and backward recursion. Instead, we can evaluate the Hamiltonian value at a single time step, specifically at $t=K-1$. This significantly simplifies the analysis and makes it easier to compute the estimation error $| H-H_g |$. Based on \textit{Lemma} \ref{lemma:conservative}, we have:

\begin{theorem}
    (Tighter Bound of Critic Estimation Error in HAC) When HAC and MVE converge to their optimal solutions, suppose: (i) the learned dynamics $\hat{f}$ in both HAC and MVE satisfy the generalization error $\delta$ in $K$-step depth (ii) the error $| Q_{\phi'} - Q|$ between target critic network $Q_{\phi'}(s_t, a_t)$ and ground-truth $Q(s_t, a_t)$ is bounded by $\epsilon$. If: (a) $r(s_t, a_t): = -(s_t^{T}Rs_t+a_t^TPa_t)$ is quadratic; (b) $r_T(s_K):= C$ (Constant),  $As_T+b$ (Linear), or $s_T^{T}As_T$ (Quadratic), we have: $\max 
|H-H_g | \leq \max | Q_{\phi} - Q| $.
\label{theorem::bound}
\end{theorem}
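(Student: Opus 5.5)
We will reduce both sides of the inequality to quantities that depend only on the one-step model error $\delta$, plus (for MVE) the target-critic error $\epsilon$, and then compare them. The guiding idea is to use Lemma~\ref{lemma:conservative}, together with its discrete midpoint analogue. Once HAC and MVE have converged, the policy actions satisfy the PMP stationarity condition $\nabla_{a_t}H=0$, so the Hamiltonian is constant along the rollout. Hence both $H$ (under $\hat f$) and $H_g$ (under $f$) may be evaluated at the single index $t=K-1$, rather than tracking errors through the forward rollout and the backward costate recursion.

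\textbf{Step 1: reduce the HAC error to one time step.} At $t=K-1$ we have $\hat\lambda_K=\nabla_{\hat s_K}r_T(\hat s_K)$, so
\[
H(\hat s_{K-1},\hat a_{K-1},\hat\lambda_K)=\gamma^{K-1}\big(\hat s_{K-1}^TR\hat s_{K-1}+\hat a_{K-1}^TP\hat a_{K-1}\big)+\hat\lambda_K^T\hat f(\hat s_{K-1},\hat a_{K-1}),
\]
and $H_g$ is the same expression with $f$, $s$, $\lambda$.

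\textbf{Step 2: split into the three cases of assumption (b).}
\begin{itemize}
\item If $r_T=C$, then $\hat\lambda_K=\lambda_K=0$, and the difference is only the running-cost term at step $K-1$.
\item If $r_T=As_T+b$, then $\lambda_K=A^T$ in both cases, and the error picks up $|A(\hat f-f)|\le\|A\|\delta$.
\item If $r_T=s_T^TAs_T$, then $\lambda_K=(A+A^T)s_K$. The term $\lambda_K^Tf$ becomes a quadratic form $s_K^T(A+A^T)s_K$ in the final state, whose perturbation is bounded by $\|A+A^T\|(2\|s_K\|\delta+\delta^2)$.
\end{itemize}
In each case the state-dependent pieces are controlled through assumption (i), which gives $\|\hat s_t-s_t\|\le\delta$ up to depth $K$. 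We will use the same starting state $s_0$ for both rollouts, and we will treat the policy actions as identical on both rollouts. Since the running cost is also quadratic, every contribution reduces to a single-step $O(\delta)$ term, discounted by $\gamma^{K-1}$.

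\textbf{Step 3: lower bound the MVE error.} At convergence, $Q_\phi$ equals its $K$-step target built from $\hat f$, and $Q$ equals the same target built from $f$. Subtracting the two gives
\[
|Q_\phi-Q|=\Big|\sum_{t}\gamma^{t}\big(r(\hat s_t,\hat a_t)-r(s_t,a_t)\big)+\gamma^K\big(Q_{\phi'}(\hat s_K,\hat a_K)-Q(s_K,a_K)\big)\Big|.
\]
This contains a sum of $K-1$ quadratic reward discrepancies, each of the same form as the single-step HAC term, plus a bootstrap term that carries $\epsilon$ on top of the model-induced mismatch. We then argue that the worst case, where the maximum is taken over admissible model errors of size $\delta$ and critic errors of size $\epsilon$, aligns all these terms in sign. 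In that worst case the MVE maximum is at least the maximal single-step discrepancy plus $\gamma^K\epsilon$. Because of conservation, this single-step discrepancy dominates the HAC bound from Step 2; the terminal-reward contribution is matched by the corresponding terminal or $Q$ term inside MVE.

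\textbf{Main obstacle.} The delicate step is justifying that conservation transfers to the discrete learned system and to $H$ computed under $\hat f$. The HAC costates are computed from the imperfect model, and a converged HAC policy makes $\nabla_a H=0$ only for $\hat f$. We would handle this by invoking the midpoint-discretization remark after Lemma~\ref{lemma:conservative} separately for $(\hat f,\hat\lambda)$ and for $(f,\lambda)$. Each Hamiltonian is then constant along its own trajectory, so comparing them at $t=K-1$ is legitimate. A second weak point is that the inequality compares maxima, so we need worst-case realizations on the MVE side rather than pointwise bounds. We would state the comparison as $\sup$ over models satisfying (i) and critics satisfying (ii), and exhibit an aligned-error instance achieving the MVE lower bound.
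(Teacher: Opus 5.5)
Your Step~3 breaks at the claim that ``the terminal-reward contribution is matched by the corresponding terminal or $Q$ term inside MVE,'' in two ways.

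\textbf{Discount mismatch.} In HAC the terminal costate is $\hat\lambda_K=\nabla\Phi_T(\hat s_K)$ with $\Phi_T=-r_T$ undiscounted. So $\hat\lambda_K^T\hat s_K-\lambda_K^Ts_K$ enters $H-H_g$ with weight $1$, whereas the MVE terminal discrepancy carries $\gamma^K$. Your remark that every contribution is ``discounted by $\gamma^{K-1}$'' is false for this term. Take $R=0$, which (a) allows and which is exactly the paper's MountainCar and Swimmer setup, with linear $r_T$. Use the simplifications you and the paper both rely on: identical actions, $K=T$, conservation granted. Then $\max|H-H_g|=\|A\|\delta$ while $\max|Q_\phi-Q|\le\gamma^K(\|A\|\delta+\epsilon)$. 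The latter is smaller once $\gamma<1$ and $\epsilon<(\gamma^{-K}-1)\|A\|\delta$, so no aligned instance can help. The paper gets past this only by adding, for the linear and quadratic cases, the hypothesis that $r_T$ is also discounted by $\gamma$. You need it too, or $\gamma=1$.

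\textbf{Factor of two in the quadratic case.} Your own Step~2 exhibits it: $\lambda_K=\pm(A+A^T)s_K$, and $s^T(A+A^T)s=2s^TAs$ for every square $A$. So HAC's terminal discrepancy is $2|\hat s_K^TA\hat s_K-s_K^TAs_K|$, twice the MVE one. Even with the discount patch and $R=0$, set $M=\sup_{\|e\|\le\delta}|(s_K+e)^TA(s_K+e)-s_K^TAs_K|$. Then $\max|H-H_g|=2\gamma^KM$ against $\max|Q_\phi-Q|\le\gamma^K(M+\epsilon)$, so the inequality fails whenever $\epsilon<M$. The paper's proof misses this because it writes $\lambda_K=-2As_K$ and then bounds the terminal part by $|\hat s_K^TA\hat s_K-s_K^TAs_K|$, dropping the $2$.

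Steps~1--2 coincide with the paper; Step~3 is where you genuinely diverge, and your direction is the logically correct one. The paper upper-bounds both errors, using $|Q(\hat s_K,a_K)-Q(s_K,a_K)|\le|r_T(\hat s_K)-r_T(s_K)|$ on the MVE side. It then identifies each ``max'' with its upper bound and subtracts to get $\sum_{t=0}^{K-2}\gamma^t|\hat s_t^TR\hat s_t-s_t^TRs_t|+\gamma^K\epsilon>0$. Comparing two upper bounds does not compare two maxima. What the statement requires is an upper bound for HAC against a worst-case lower bound for MVE. That forces the terminal terms to coincide exactly rather than merely be bounded, which is why the mismatches above surface in your approach and stay hidden in the paper's. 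The paper's terminal inequality also points the wrong way for you: you need $K=T$ so that the true $Q$ at step $K$ is exactly $r_T$.

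With $K=T$, the discounted-terminal hypothesis, and conservation granted, your aligned instance does close the constant and linear cases. Take $\hat f$ exact up to step $K-2$ and reproducing the HAC-maximizing $\hat s_{K-1},\hat s_K$, and give the target-critic error size $\epsilon$ with the matching sign. This yields $|Q_\phi-Q|=|H-H_g|+\gamma^K\epsilon$.

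Finally, your fix for conservation is inconsistent with your own simplification. Actions with $\nabla_aH=0$ for $\hat f$ are not stationary for $H_g$ on the true trajectory, so Lemma~\ref{lemma:conservative} cannot be invoked there with identical actions. If each rollout instead uses its own optimal actions, the $a^TPa$ terms no longer cancel. Moreover, with $\gamma<1$ the running cost $\gamma^t r$ depends explicitly on $t$, which the lemma's argument excludes. The paper simply asserts conservation at convergence, so this weakness is shared, but your plan does not remove it.
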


\textit{Proof:} See Proof in Appendix \ref{proofT2}. 

We note that our HAC is applied only to systems with deterministic dynamics; for potential stochastic extensions, see Appendix~\ref{app::stochastic}.


\section{Experiments}
\label{exp}

We evaluate HAC on a suite of continuous, deterministic control environments with increasing levels of complexity, including \textit{Linear Quadratic Regulator (LQR), Pendulum, MountainCar, MuJoCo Swimmer, and Hopper}, under both \textit{online} and \textit{offline} reinforcement learning settings. These environments are characterized by quadratic running costs and constant, linear, or quadratic terminal rewards, consistent with the assumptions of \textit{Theorem}~\ref{theorem::bound}. Among them, \textit{LQR} and \textit{Pendulum} represent short-horizon control tasks, while the remaining environments involve longer horizons and more complex dynamics.

Through these experiments, we empirically validate whether the tighter bound on critic estimation error established in Theorem~\ref{theorem::bound} improves our HAC's performance, and compare against classical actor–critic methods such as DDPG and SAC, as well as their Model-Based Value Expansion (MVE) extensions. Detailed experimental configurations are provided in Appendix~\ref{expset}.


Specifically, our experiments are designed to answer the following questions:
\begin{itemize}
\item \textbf{Control performance:} In online RL settings, does HAC achieve superior task performance compared to MVE-DDPG in an efficient way?
\item \textbf{Sample efficiency:} In offline RL with limited datasets, does HAC consistently identify generalizable actions which outperform state-of-the-art baselines?
\item \textbf{Robustness to OOD conditions:} Does HAC enhance robustness to out-of-distribution (OOD) scenarios, such as variations in initial states?
\end{itemize}

\subsection{Online RL Cases}

\begin{figure}[htbp]
    \centering

    \begin{subfigure}[t]{0.45\linewidth}
        \centering
        \includegraphics[width=\linewidth]{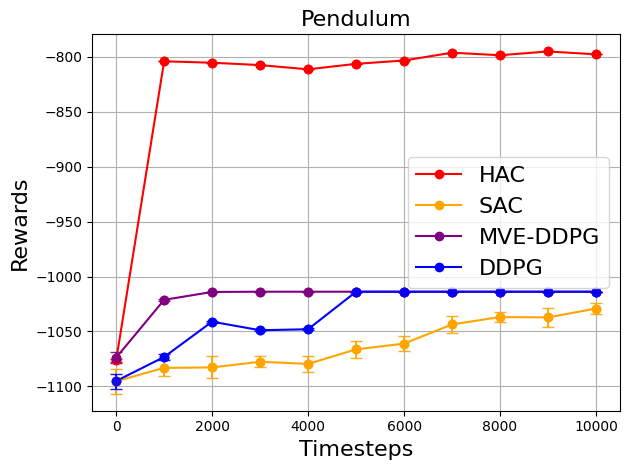}
        \caption{Pendulum}
    \end{subfigure}
    \hfill
    \begin{subfigure}[t]{0.45\linewidth}
        \centering
        \includegraphics[width=\linewidth]{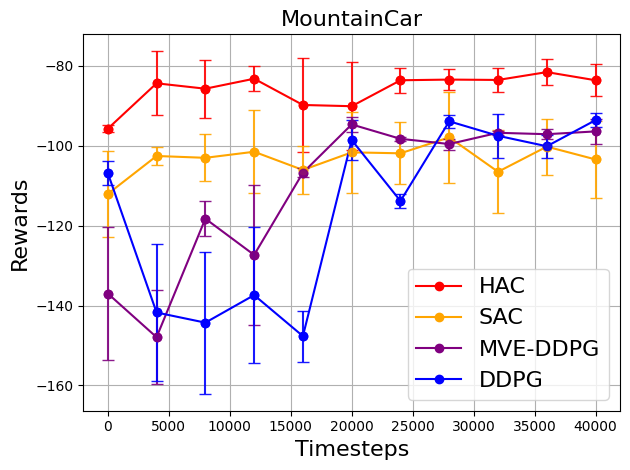}
        \caption{MountainCar}
    \end{subfigure}

    \vspace{0.6em}

    \begin{subfigure}[t]{0.45\linewidth}
        \centering
        \includegraphics[width=\linewidth]{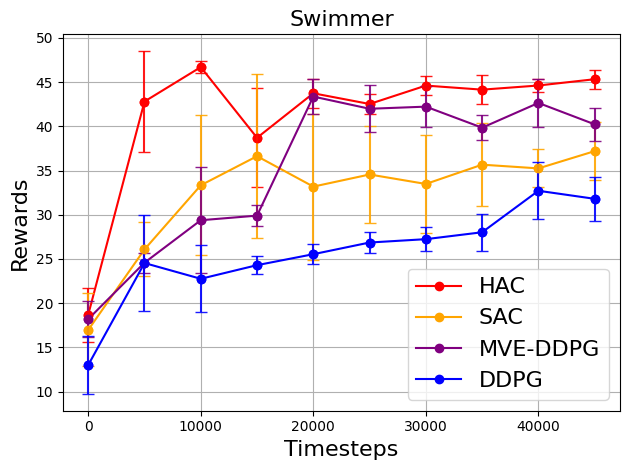}
        \caption{Swimmer}
    \end{subfigure}
    \hfill
    \begin{subfigure}[t]{0.45\linewidth}
        \centering
        \includegraphics[width=\linewidth]{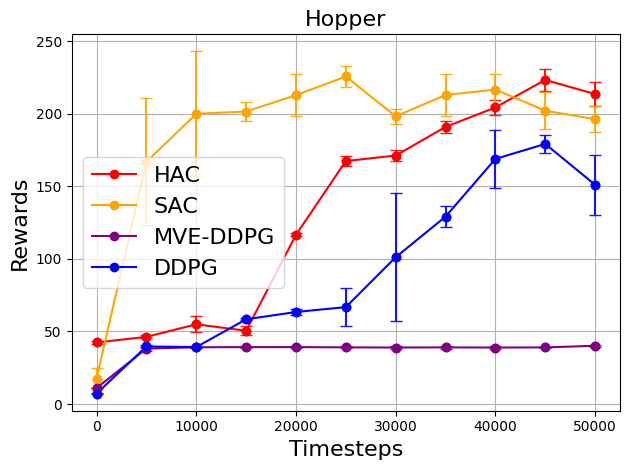}
        \caption{Hopper}
    \end{subfigure}
    \caption{\textbf{Learning Curves Comparison (Online RL):}  We compare our HAC with DDPG, SAC and MVE-DDPG on \textit{Pendulum}, \textit{MountainCar}, \textit{Swimmer} and \textit{Hopper}. For MVE-DDPG and HAC which require multi-step imaginary rollouts, we set the rollout horizon to $K=10$ for \textit{Pendulum} to cover all control horizon, $K=5$ for \textit{MountainCar} and \textit{Swimmer}, and $K=3$ for \textit{Hopper}. All results are averaged over 5 random seeds.}
    \label{fig:online_curves}
\end{figure}

We first evaluate the primary objective of reinforcement learning agents in terms of control performance. As established in \textit{Theorem}~\ref{theorem::bound}, HAC achieves a tighter upper bound on critic estimation error than MVE after convergence. This theoretical advantage suggests that HAC may yield more accurate value estimates, which in turn can lead to improved control performance and higher accumulated rewards. To empirically validate this hypothesis, we compare the proposed HAC method against representative baseline algorithms, including DDPG, SAC, and MVE-DDPG, across 4 benchmark environments (\textit{Pendulum}, \textit{MountainCar}, \textit{Swimmer} and \textit{Hopper}) in online reinforcement learning settings. For all tasks, the policy is updated every 100 environment timesteps. For the model-based approaches, HAC and MVE-DDPG, we employ the same pre-trained dynamics model as literature to ensure a fair comparison.

Fig.~\ref{fig:online_curves} presents the average learning curves (with standard deviations) over 5 random seeds for each method. As shown in the results, HAC significantly outperforms all baseline methods on the short-horizon \textit{Pendulum} task, thanks to the more accurate critic estimation (see in Appendix \ref{app::critic_error}). In longer-horizon environments, such as  \textit{MountainCar, Swimmer}, and \textit{Hopper}, HAC continues to achieve superior or highly competitive control performance compared to DDPG, SAC, and MVE-DDPG. Although our Hamiltonian estimation in HAC may exhibit deviations from the ground-truth due to the limited coverage of the control horizon in imaginary rollouts (i.e. $K<T$, see detailed analysis in Appendix~\ref{appendix::error_analyze}), these imperfections do not prevent HAC from attaining strong empirical performance. Overall, these results indicate that HAC converges to more effective control solutions and demonstrates greater adaptability across environments of varying complexity than classical actor–critic methods and their MVE-based extensions.

We observe that both HAC and MVE-DDPG encounter some challenges on the \textit{Hopper} task. Specifically, HAC exhibits slow convergence first, while MVE-DDPG becomes trapped in suboptimal solutions. This behavior can be attributed to the presence of an implicit \textit{healthy reward} mechanism in \textit{Hopper}, in addition to the standard quadratic cost. When the agent’s state violates the predefined health conditions, the episode terminates early at a time $t<T$. Under such circumstances, inaccuracies in the pre-trained dynamics model may cause the agent to enter unhealthy state regions during imaginary rollouts, leading to degraded performance for both methods.

\subsection{Offline RL Cases}

We also evaluate HAC under offline RL settings using limited pre-collected datasets, with the goal of assessing its ability to effectively exploit offline data and fastly learn policies. For each environment, the offline dataset is constructed from randomly collected transition trajectories (see Appendix \ref{expset}). To provide a comprehensive comparison, we benchmark HAC against widely used, state-of-the-art model-free offline RL baselines, including Implicit Q-Learning (IQL) \cite{kostrikov2021offline} and Offline SAC (SAC-Off) \cite{yu2023actor}, as well as a representative model-based offline RL method, Model-based Offline Policy Optimization (MOPO) \cite{yu2020mopo}.

\begin{table}[htbp]
\caption{Offline RL performance (rewards, averaged over 10 trials).}
\label{tab:offline_hac}
\centering
\footnotesize
\setlength{\tabcolsep}{3pt}
\begin{tabular}{m{1.6cm}cccc}
\toprule
Env & IQL & SAC-Off & MOPO & HAC \\
\midrule
LQR
 & \shortstack{-16.5\\{$\pm$0.2}}
 & \shortstack{-19.7\\{$\pm$1.3}}
 & \shortstack{-19.0\\{$\pm$1.1}}
 & \shortstack{\textbf{-15.0}\\{$\pm$0.0}} \\
\midrule
Pendulum
 & \shortstack{-999.8\\{$\pm$6.4}}
 & \shortstack{-1021.0\\{$\pm$9.6}}
 & \shortstack{-1017.1\\{$\pm$1.3}}
 & \shortstack{\textbf{-803.0}\\{$\pm$0.1}} \\
\midrule
MountainCar
 & \shortstack{-97.1\\{$\pm$1.4}}
 & \shortstack{-98.8\\{$\pm$0.7}}
 & \shortstack{-94.3\\{$\pm$0.3}}
 & \shortstack{\textbf{-83.5}\\{$\pm$10.9}}\\
\midrule
Swimmer
 & \shortstack{32.6\\{$\pm$11.1}}
 & \shortstack{31.7\\{$\pm$4.7}}
 & \shortstack{36.7\\{$\pm$8.3}}
 & \shortstack{\textbf{51.9}\\{$\pm$7.4}} \\
\midrule
Hopper
 & \shortstack{47.4\\{$\pm$0.9}}
 & \shortstack{160.9\\{$\pm$118.1}}
 & \shortstack{\textbf{210.3}\\{$\pm$202.9}}
 & \shortstack{167.8\\{$\pm$7.3}} \\
\bottomrule
\end{tabular}
\vspace{-0.12in}
\end{table}

Table~\ref{tab:offline_hac} reports the performance of HAC and all baseline methods in offline reinforcement learning settings across five environments. The results demonstrate that HAC achieves control performance that is competitive with its online counterpart, despite being trained on datasets of limited size. Moreover, HAC maintains a substantial performance advantage on short-horizon tasks such as \textit{LQR} and \textit{Pendulum}, while achieving competitive returns on long-horizon tasks relative to state-of-the-art offline RL baselines. This behavior is consistent with the online RL setting and can be attributed to the limited coverage of the control horizon in imaginary rollouts.

Additionally, as shown by the learning curves in Fig.~\ref{fig:offline_curves}, HAC consistently exhibits faster convergence than all baseline methods. Across both short-horizon and long-horizon tasks (except \textit{Hopper}, which is discussed in the online RL section), HAC converges to near-optimal solutions in substantially fewer training epochs, indicating reduced sample requirements  (about 25\% to 30\% samples compared with other baselines). These results prove that HAC achieves strong sample efficiency in offline reinforcement learning settings and is capable of solving challenging control problems using limited pre-collected datasets.

\begin{figure}[t]
    \centering

    \begin{subfigure}[t]{0.45\linewidth}
        \centering
        \includegraphics[width=\linewidth]{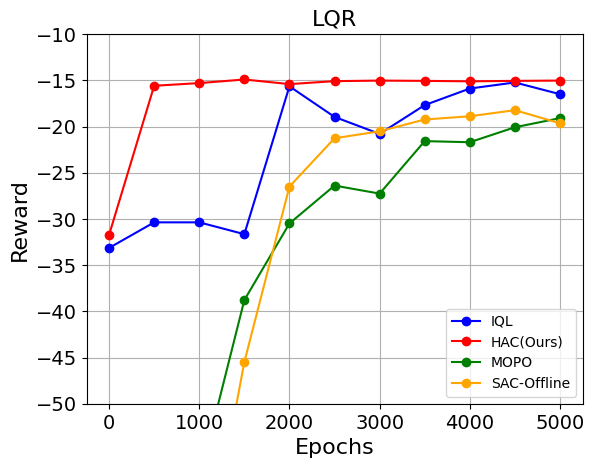}
        \caption{LQR}
    \end{subfigure}
    \hfill
    \begin{subfigure}[t]{0.45\linewidth}
        \centering
        \includegraphics[width=\linewidth]{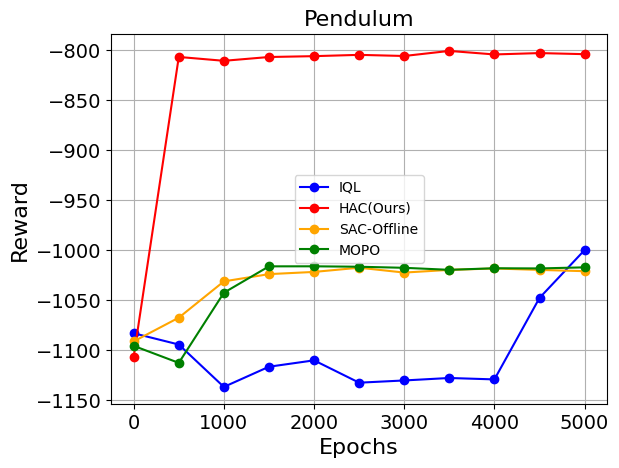}
        \caption{Pendulum}
    \end{subfigure}

    \vspace{0.6em}

    \begin{subfigure}[t]{0.45\linewidth}
        \centering
        \includegraphics[width=\linewidth]{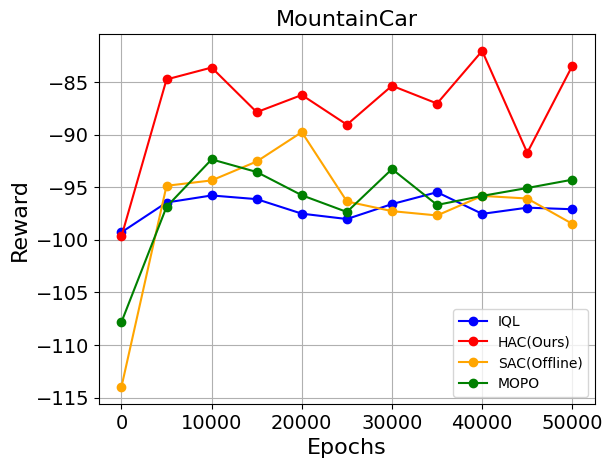}
        \caption{MountainCar}
    \end{subfigure}
    \hfill
    \begin{subfigure}[t]{0.45\linewidth}
        \centering
        \includegraphics[width=\linewidth]{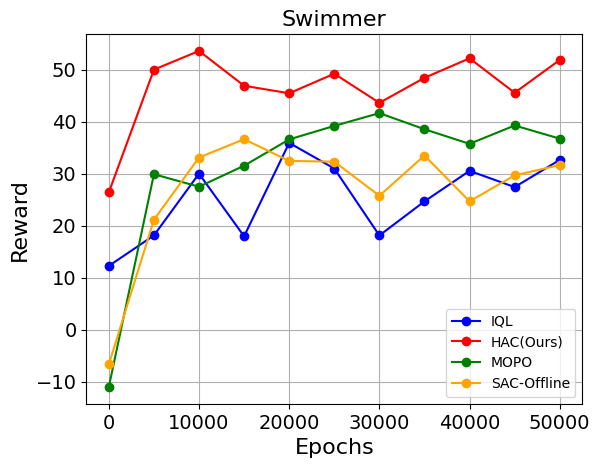}
        \caption{Swimmer}
    \end{subfigure}
    \caption{\textbf{Learning Curves Comparison (Offline RL)}. We compare our HAC with IQL, SAC-Off and MOPO on  \textit{LQR}, \textit{Pendulum}, and \textit{MountainCar}, \textit{Swimmer}. For MOPO and HAC which require multi-step imaginary rollouts, rollout horizon is to $K=10$ for \textit{LQR} and \textit{Pendulum} to cover all control horizon, $K=5$ for \textit{MountainCar} and \textit{Swimmer}. Results are averaged over 5 random seeds.}
    \label{fig:offline_curves}
\end{figure}

\subsection{Robustness and Generalization Capabilities}

\begin{figure}[h]
    \centering
    \begin{subfigure}[b]{0.5\linewidth}
        \centering
        \includegraphics[width=\linewidth]{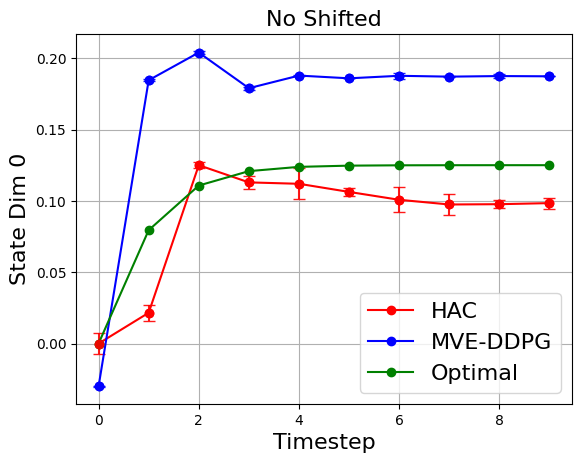}
        \caption{Trajectories with no initial state shift.}
        \label{fig:TrajNoShift}
    \end{subfigure}%
    \hfill
    \begin{subfigure}[b]{0.5\linewidth}
        \centering
        \includegraphics[width=\linewidth]{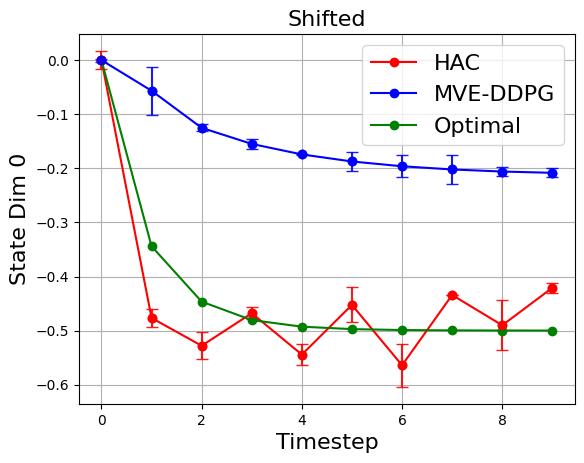}
        \caption{Trajectories with shifted initial state.}
        \label{fig:TrajShifted}
    \end{subfigure}
    \caption{\textbf{Robustness comparison between MVE-DDPG and proposed HAC against initial state shift}. The experiment is conducted on a 10-step LQR task with state dimension 5 and action dimension 3. Plots (a) and (b) illustrate optimal trajectories generated by HAC and MVE-DDPG in dimension 0 (averaged on 10 random seeds) under non-shifted and shifted initial states.}
    \label{fig:Trajs}
\end{figure} 
In this subsection, we further evaluate the robustness of our HAC method to out-of-distribution (OOD) issues commonly encountered in model-based reinforcement learning. When the critic network is trained on limited data, actions or states outside the training distribution can lead to significant estimation errors, resulting in poor and unstable control performance during policy execution. Since HAC demonstrates more accurate critic estimation, we investigate whether this accuracy also translates into improved robustness under OOD conditions.
Specifically, we introduce external disturbances by shifting and differing the initial state of the environment between training and testing. This causes the policy agent to encounter state-action pairs that lie outside the distribution seen during training, where the learned critic may fail to provide accurate value estimates. Under these conditions, we are interested in evaluating whether our HAC method can produce more consistent and reliable solutions.

We conduct this experiment in the 10-step LQR environment (See details in Appendix~\ref{expset}) . As shown in Fig.~\ref{fig:Trajs}, both methods perform well in the non-shifted case. However, in the shifted scenario, MVE-DDPG fails to produce stable trajectories, showing erratic behavior and significant divergence, especially in later timesteps. In contrast, our HAC maintains stable and consistent performance, demonstrating lower sensitivity to the disturbance. These results confirm that HAC is more robust to OOD issues compared to MVE. 

\section{Conclusion}
\label{con}
In this paper, we propose Hamiltonian Actor-Critic (HAC), a novel, efficient MBRL algorithm that replaces conventional value-function critics with a Hamiltonian-based surrogate derived from Pontryagin Maximum Principles (PMP), for systems with deterministic dynamics. We establish convergence guarantees for HAC under mild smoothness and Lipschitz continuity assumptions, and further derive a tighter upper bound on critic estimation error compared to Model-Based Value Expansion (MVE). Extensive experiments  empirically validate these theoretical advantages. In online RL, HAC consistently achieves superior or highly competitive control performance, while demonstrating faster convergence. In offline RL with limited datasets, HAC outperforms state-of-the-art model-free baselines and remains competitive with model-based methods. HAC is also robust against initial state shifts.  Together, these results confirm that tighter critic estimation error bounds translate into improved performance, efficiency, and robustness, establishing HAC as an effective and principled alternative to standard actor–critic frameworks.

\bibliographystyle{IEEEtran}
\bibliography{ref}

\clearpage
\onecolumn
\appendix
\section{Theoretical Proofs}
\label{proofs}

\subsection{Proof of Lemma 4.1}
\label{proofL1}
\paragraph{Lemma 4.1.} \textit{In optimal control formulation Eq. \eqref{eq:optimization}, updating $a_t$ with $a_t \leftarrow a_t - \eta \nabla_{a_t} H(s_t, a_t, \lambda_{t+1})$ is equivalent to taking gradient descent of $J = \sum_{t=0}^{T-1} c(s_t, a_t) + \Phi_T(s_T)$ with respect to $a_t$ : $a_t \leftarrow a_t - \eta \nabla_{a_t} J$}.  
\begin{proof}
    Notice that $s_{t+1} = f(a_t, s_t)$ in Eq. \eqref{eq:optimization}. Hence, $\nabla_{s_t} f(s_t, a_t) = \frac{\partial s_{t+1}}{\partial s_{t}}$ . Based on Eq. \eqref{eq:costate},  we can do the backward recursion:
    \begin{align}
        \lambda_t  &= \nabla_{s_t} c(s_t, a_t) + \nabla_{s_t}f(s_t, a_t)\lambda_{t+1} \\
        & = \frac{\partial c(s_t, a_t)}{\partial s_t} + \frac{\partial f(s_t, a_t)}{\partial s_t} \lambda_{t+1}.
    \end{align}
    First consider $t=T-1$:
    \begin{align}
        \lambda_{T-1}  & = \frac{\partial c(s_{T-1}, a_{T-1})}{\partial s_{T-1}} + \frac{\partial f(s_{T-1}, a_{T-1})}{\partial s_{T-1}} \lambda_{T} \\
        & = \frac{\partial c(s_{T-1}, a_{T-1})}{\partial s_{T-1}} + \frac{\partial s_T}{\partial s_{T-1}} \frac{\partial \Phi(s_T)}{\partial s_T} \\ 
        & = \frac{\partial c(s_{T-1}, a_{T-1})}{\partial s_{T-1}} + \frac{\partial \Phi(s_T)}{\partial s_{T-1}};
    \end{align}
    Then consider $t=T-2$:
    \begin{align}
        \lambda_{T-2}  & = \frac{\partial c(s_{T-2}, a_{T-2})}{\partial s_{T-2}} + \frac{\partial f(s_{T-2}, a_{T-2})}{\partial s_{T-2}} \lambda_{T-1} \\
        & = \frac{\partial c(s_{T-2}, a_{T-2})}{\partial s_{T-2}} + \frac{\partial s_{T-1}}{\partial s_{T-2}} (\frac{\partial c(s_{T-1}, a_{T-1})}{\partial s_{T-1}} + \frac{\partial \Phi(s_T)}{\partial s_{T-1}}) \\ 
        & = \frac{\partial c(s_{T-2}, a_{T-2})}{\partial s_{T-2}} + \frac{\partial c(s_{T-1}, a_{T-1})}{\partial s_{T-2}}  +\frac{\partial \Phi(s_T)}{\partial s_{T-2}};
    \end{align}
    and so on for $t =T-3,....,0$. Hence we have
    \begin{align}
        \lambda_t = \sum_{i=t}^{T-1}(\frac{\partial c(s_i, a_i)}{\partial s_t}) + \frac{\partial \Phi(s_T)}{\partial s_t}, \quad  0 \leq t \leq T-1.
    \end{align}
    Hence, the gradient of Hamiltonian $H(s_t, a_t, \lambda_{t+1})$ w.r.t. $a_t$ is:
\begin{align}
    \nabla_{a_t} H(s_t, a_t, \lambda_{t+1}) & = \frac{\partial c(s_t, a_t)}{\partial a_t} + \frac{\partial f(s_t, a_t)}{\partial a_t}\lambda_{t+1} \nonumber\\  
    & = \frac{\partial c(s_t, a_t)}{\partial a_t} + (\sum_{i=t+1}^{T-1} \frac{\partial c(s_t, a_t)}{\partial s_{t+1}} + \frac{\partial \Phi(s_T)}{\partial s_{t+1}}) \frac{\partial s_{t+1}}{\partial a_t} \nonumber \\
    &= \sum_{i=t}^{T-1} \frac{\partial c(s_t, a_t)}{\partial a_{t}} + \frac{\partial \Phi(s_T)}{\partial a_{t}}.
\end{align}
Notice that $c(s_i, a_i)$ for $i=0, ...,t-1$ has no relation to $a_t$, $\sum_{i=1}^{t-1} \frac{\partial c(s_i, a_i))}{\partial a_t} = 0$. Hence, $\nabla_{a_t} J = 0 +  \sum_{i=t}^{T-1} \frac{\partial c(s_t, a_t)}{\partial a_{t}} + \frac{\partial \Phi(s_T)}{\partial a_{t}}$ = $\nabla_{a_t} H$. 
\end{proof}

\subsection{Proof of Theorem 4.2}
\label{proofT1}
\paragraph{Theorem 4.2.} \textit{(Convergence of HAC) Suppose the following hold: (i) $J$ is differentiable  and Lipschitz continuous with constant $L_m$ w.r.t. $a_t$ (ii) $\nabla_{a_t} J$ is Lipschitz continuous with contant $L_J$ (iii) $\pi_{\theta}$ is differentiable  and Lipschitz continuous with constant $L_G$ w.r.t. $\theta$ (iv) $\nabla_{\theta} \pi_{\theta}$ is $L_p$-Lipschitz continuous. The policy agent updating $\theta \leftarrow \theta - \eta \nabla_{\theta}L_{actor}(\theta)$ in HAC guarantees convergence, with learning rate $\eta < 2/(L_JL_G^2+L_pL_m)$}. 
\begin{proof}
    With the actor loss $L_{actor}(\theta) =\frac{1}{K}\sum_{t=0}^{K-1} H(\hat{s}_t, \hat{a}_t, \hat{\lambda}_{t+1})$, the gradient:
    \begin{align}
        \nabla_{\theta}L_{actor}  = \frac{1}{K}\sum_{t=0}^{K-1} \nabla_{\theta}H.
    \end{align}
    
    By the chain rule, we have: 
    \begin{align}
        \nabla_{\theta}H = \nabla_{\hat{a}_t} H \cdot \nabla_{\theta}  \hat{a}_t.
    \end{align}

    From \textit{Lemma 1}, we know:
    \begin{equation}
        \nabla_{\hat{a}_t} H = \nabla_{\hat{a}_t} J.
    \end{equation}

    so it follows that:
    \begin{equation}
        \nabla_{\theta}L_{actor} = \frac{1}{K}\sum_{t=0}^{K-1} \nabla_{\hat{a}_t}J \cdot \nabla_{\theta} \hat{a}_t = \frac{1}{K}\sum_{t=0}^{K-1} \nabla_{\hat{a}_t}J \cdot \nabla_{\theta} \pi_{\theta}(\hat{s}_t)  = \frac{1}{K}\sum_{t=0}^{K-1} \nabla_{\theta} J(\pi_{\theta}(\hat{s}_t)).
    \end{equation}
    

    To prove convergence, we show that $\nabla_{\theta} J(\pi_{\theta}(\hat{s}_t))$ is Lipschitz continuous with respect to $\theta$. For arbitrary $\theta_1, \theta_2$, consider:
    \begin{equation}
        \Vert \nabla_{\theta}J(\pi(\theta_1)) - \nabla_{\theta}J(\pi(\theta_2)) \Vert
    \end{equation}

    Applying the chain rule:
    \begin{equation}
        \nabla_{\theta}J(\pi(\theta)) = \nabla_aJ(\pi_{\theta}) \cdot \nabla_{\theta} \pi(\theta).
    \end{equation}

    Hence we have:
    \begin{align}
         &\Vert \nabla_{\theta}J(\pi(\theta_1)) - \nabla_{\theta}J(\pi(\theta_2)) \Vert  = \Vert \nabla_{a}J(\pi(\theta_1)) \nabla_{\theta} \pi(\theta_1) - \nabla_{a}J(\pi(\theta_2)) \nabla_{\theta} \pi(\theta_2) \Vert \nonumber\\
        &\leq \Vert \nabla_{a}J(\pi(\theta_1)) - \nabla_{a}J(\pi(\theta_2))\Vert \Vert\nabla_{\theta} \pi(\theta_1)\Vert + \Vert\nabla_{\theta} \pi(\theta_1) - \nabla_{\theta} \pi(\theta_2) \Vert \Vert\nabla_{a}J(\pi(\theta_2)) \Vert.
        \label{eq:ineq1}
    \end{align}
    By Assumptions (ii), (iv) we have:
    \begin{align}
        \Vert \nabla_{a}J(\pi(\theta_1)) - \nabla_{a}J(\pi(\theta_2)\Vert &\leq L_J\Vert \pi(\theta_1)- \pi(\theta_2)\Vert,\\
        \Vert\nabla_{\theta} \pi(\theta_1) - \nabla_{\theta} \pi(\theta_2) \Vert &\leq L_p\Vert \theta_1 - \theta_2 \Vert.
     \end{align}   
     By Assumption (i), (iii) we have:
     \begin{align}
         \Vert \pi(\theta_1)- \pi(\theta_2))\Vert &\leq L_G\Vert \theta_1 - \theta_2 \Vert,\\
         \Vert \nabla_{\theta}  \pi(\theta_1) \Vert 
 &= \lim_{h \rightarrow0} \Vert \frac{\pi(\theta_1+h) - \pi(\theta_1)}{h} \Vert\leq L_G, \\
         \Vert\nabla_{a}J(\pi(\theta_2)) \Vert &=\lim_{h \rightarrow0} \Vert \frac{J(\pi(\theta_2)+h) - J(\pi(\theta_2))}{h} \Vert \leq L_m.
     \end{align}
     Substituting these bounds into the inequality Eq. \eqref{eq:ineq1} yields:
     \begin{align}
         \Vert \nabla_{\theta}J(\pi(\theta_1)) - \nabla_{\theta}J(\pi(\theta_2)) \Vert \leq  (L_G^2L_J+L_pL_m) \Vert \theta_1 - \theta_2 \Vert.
     \end{align}
     Thus, $\nabla_{\theta}L_{actor}$ is Lipschitz continuous with constant $L_{act}=L_G^2L_J+L_pL_m$.

     Consider the standard gradient descent update of HAC:
     \begin{equation}
         \theta_{i+1}  = \theta_{i} - \eta \nabla_{\theta}L_{actor}(\theta).
         \label{eq:gdhac}
     \end{equation}

    Since $\nabla_{\theta}L_{actor}$ is Lipschitz continuous, applying the descent lemma:
     \begin{align}
         L_{actor}(\theta_{i+1}) \leq L_{actor}(\theta_{i}) + \nabla_{\theta_i}L_{actor}(\theta_i)  \cdot (\theta_{i+1} - \theta_{i}) + \frac{L_{act}}{2}\Vert\theta_{i+1} - \theta_{i}\Vert^2 .
         \label{eq:gdl}
      \end{align}

    Substituting the update step from Eq.\eqref{eq:gdhac} into Eq.\eqref{eq:gdl}, we obtain:
    \begin{equation}
        L_{actor}(\theta_{i+1}) \leq L_{actor} (\theta_i) - \eta(1-\frac{L_{act}}{2}\eta) \Vert \nabla_{\theta_i} L_{actor}(\theta_i)\Vert^2.
        \label{eq:sum}
    \end{equation}

    For any learning rate satisfying $0 < \eta < \frac{2}{L_{act}} $, the term $\eta(1-\frac{L_{act}}{2}\eta)$. Therefore, the actor loss is non-increasing:
    \begin{equation}
         L_{actor}(\theta_{i+1})  \leq  L_{actor}(\theta_{i}).
    \end{equation}

    This implies that the actor loss decreases monotonically with each iteration.

    Summing the inequality Eq. \eqref{eq:sum} from iteration $i=0$ to $N-1$ gives:
    \begin{equation}
        \sum_{i=1}^{N-1} \Vert \nabla_{\theta_i} L_{actor}(\theta_i)\Vert^2 \leq \frac{1}{\eta(1-L_{act} \cdot \eta/2)} (L_{actor}(\theta_0) - L_{actor}(\theta_N)).
    \end{equation}

   Assuming the existence of a global minimum $\theta^*$ for $L_{actor}(\theta)$, we have:
    \begin{align}
        \lim_{N \rightarrow \infty} \sum_{i=1}^{N-1} \Vert \nabla_{\theta_i} L_{actor}(\theta_i)\Vert^2 &\leq \lim_{N \rightarrow \infty}  \frac{1}{\eta(1-L_{act} \cdot \eta/2)} (L_{actor}(\theta_0) - L_{actor}(\theta_N)) \nonumber \\
        &\leq \frac{1}{\eta(1-L_{act} \cdot \eta/2)} (L_{actor}(\theta_0) - L_{actor}(\theta^*)) \nonumber \\
        & < \infty.
    \end{align}

    The sum converges. As a result:
    \begin{equation}
        \lim_{i \rightarrow \infty } \Vert \nabla_{\theta_i} L_{actor}(\theta_i)\Vert^2 = 0.
    \end{equation}

    which implies that $\theta_i$ converges to a stationary point corresponding to a local or global minimum where the gradient  $\nabla_{\theta} L_{actor}(\theta)$ 
    vanishes.
    
\end{proof}

\subsection{Proof of Theorem 4.4}
\label{proofT2}

\paragraph{Lemma 4.3.} \textit{(Hamiltonian Conservation Law in Optimal Control) The Hamiltonian of the continuous-time optimal control problem:  $\int_{t=0}^T r(s(t), a(t))+r_T(s(T)), \text{s.t.}\;  \dot{s} = f(s(t), a(t))$, is constant across time when applying the assumed optimal $a(t)$ from PMP}. 

\paragraph{Theorem 4.4.} \textit{(Tighter Bound of Critic Estimation Error in HAC compared with MVE) When HAC and MVE converge to their optimal solutions, suppose: (i) the learned dynamics $\hat{f}$ in both HAC and MVE satisfy the generalization error $\delta$ in $K$-step depth (ii) the error $| Q_{\phi'} - Q|$ between target critic network $Q_{\phi'}(s_t, a_t)$ and ground-truth $Q(s_t, a_t)$ is bounded by $\epsilon$. If: (a) $r(s_t, a_t): = -(s_t^{T}Rs_t+a_t^TPa_t)$ is quadratic; (b) $r_T(s_K):= C$ (Constant),  $As_T+b$ (Linear), or $s_T^{T}As_T$ (Quadratic), we have: $\max 
|H-H_g | \leq \max | Q_{\phi} - Q| $.} 

\begin{proof}
We begin by analyzing the critic estimation error of MVE. The target used to train the critic $Q_\phi$ is:
\begin{equation}
   \sum_{t=1}^{K-1}\gamma^{t-1} r(s_t, a_t) + \gamma^{K} Q_{\phi_{'}}(s_{K}, a_{K}).
\end{equation}

where $Q{\phi'}$ denotes the target critic network. Therefore, the error between the learned critic and ground truth is:
\begin{align}
    |Q_{\phi} - Q| &= |\sum_{t=1}^{K-1}\gamma^{t-1} r(\hat{s}_t, a_t) + \gamma^{K} Q_{\phi_{'}}(\hat{s}_{K}, a_{K}) - \sum_{t=1}^{K-1}\gamma^{t-1} r(s_t, a_t) + \gamma^{K} Q_{\phi_{'}}(s_{K}, a_{K})| \\
    &\leq \sum_{t=0}^{K-1}\gamma^{t} |r(\hat{s}_t, a_t) - r(s_t, a_t)| + \gamma^{K} |Q_{\phi_{'}}(\hat{s}_{K}, a_{K}) - Q_{\phi_{'}}(s_{K}, a_{K})|.
    \label{eq:mveerror}
\end{align}
where$\hat{s}_t$ denotes the predicted state under the learned dynamics $\hat{f}$. 

Substituting $r(s_t, a_t): = -(s_t^{T}Rs_t+a_t^TPa_t)$ into Eq. \eqref{eq:mveerror}:
\begin{align}
   |Q_{\phi} - Q| &\leq \sum_{t=0}^{K-1}\gamma^{t} |\hat{s}_t^TR\hat{s}_t - s_t^TRs_t| + \gamma^{K}|Q_{\phi_{'}}(\hat{s}_{K}, a_{K}) - Q(s_{K}, a_{K})| \nonumber \\
   &\leq  \sum_{t=0}^{K-1}\gamma^{t} |\hat{s}_t^TR\hat{s}_t - s_t^TRs_t| + \gamma^{K}|Q_{\phi_{'}}(\hat{s}_{K}, a_{K}) - Q(\hat{s}_{K}, a_{K})| \nonumber \\
   &+ \gamma^{K}|Q(\hat{s}_{K}, a_{K}) - Q(s_K, a_K)| \nonumber \\
   &\leq \sum_{t=0}^{K-1}\gamma^{t} |\hat{s}_t^TR\hat{s}_t - s_t^TRs_t| + \gamma^{K}\epsilon+\gamma^{K}|Q(\hat{s}_{K}, a_{K}) - Q(s_K, a_K)|.
   \label{eq:mvebound}
\end{align}

Under known terminal reward $r_T$, we can upper-bound the final term $|Q(\hat{s}_{K}, a_{K}) - Q(s_K, a_K)|$ by:
\begin{equation}
    |Q(\hat{s}_{K}, a_{K}) - Q(s_K, a_K)| \leq |r_T(\hat{s}_{K}) - r_T(s_K)|
\end{equation}
Thus:
\begin{equation}
    |Q_{\phi} - Q| \leq \sum_{t=0}^{K-1}\gamma^{t} |\hat{s}_t^TR\hat{s}_t - s_t^TRs_t| + \gamma^{K}\epsilon + \gamma^{K}|r_T(\hat{s}_{K}) - r_T(s_K)|. 
     \label{eq:mveupper}
\end{equation}


Next, consider the critic error in HAC. By the Hamiltonian conservation property (\textit{Lemma 2}), the Hamiltonian is constant over time at convergence ($H(s_t, a_t, \lambda_{t+1}) = H(s_{t-1}, a_{t-1}, \lambda_{t})$). Therefore, we just need to compute the deviation in HAC’s Hamiltonian at time $K-1$:
\begin{align}
    |H_g-H| &= |H(\hat{s}_{K-1}, a_{K-1}, \lambda_{K}) - H(s_{K-1}, a_{K-1}, \lambda_{K})| \nonumber \\
    & = |\gamma^{K-1}(r(\hat{s}_{K-1}, a_{K-1} - r(s_{K-1}, a_{K-1})) + \hat{\lambda}_K\hat{x}_{K} - \lambda_Kx_K| \nonumber \\
    &= |\gamma^{K-1}(\hat{s}_{K-1}^TR\hat{s}_{K-1} - s_{K-1}^TRs_{K-1}) + \hat{\lambda}_K\hat{x}_{K} - \lambda_Kx_K|.
    \label{eq:HACbound}
\end{align}

We analyze the terminal adjoint state $\lambda_K$ under each terminal reward $r_T$ form:

\textbf{Case 1: Constant $r_T:=C$}.

Based on Pontryagin Maximum Principles (PMP), if $r_T=C$, then $\lambda_K = \frac{\partial \Phi(s_K)}{\partial s_K} = - \frac{\partial r_T(s_K)}{\partial s_K} = 0$. Substituting into into Eq. \eqref{eq:HACbound}:
\begin{align}
    |H_g-H| = |\gamma^{K-1}(\hat{s}_{K-1}^TR\hat{s}_{K-1} - s_{K-1}^TRs_{K-1})|.
    \label{eq:con}
\end{align}

\textbf{Case 2: Linear $r_T:=As_K+b$}.

Based on PMP, $r_T(s_K) = A s_K + b$, then $\lambda_K = - \frac{\partial r_T(s_K)}{\partial s_K} = -A$. Substituting into into Eq. \eqref{eq:HACbound}:
\begin{align}
    |H_g-H| &\leq |\gamma^{K-1}(\hat{s}_{K-1}^TR\hat{s}_{K-1} - s_{K-1}^TRs_{K-1})| + |A(\hat{s}_K-s_K)| \nonumber \\
    &\leq |\gamma^{K-1}(\hat{s}_{K-1}^TR\hat{s}_{K-1} - s_{K-1}^TRs_{K-1})| + a\delta .
    \label{eq:lin}
\end{align}
where $a$ is an eigenvalue of constant matrix $A$.

\textbf{Case 3: Quadratic $r_T:=s^T_KAs_K$}.

Based on PMP, $r_T=s^T_KAs_K$, then $\lambda_K = - \frac{\partial r_T(s_K)}{\partial s_K} = -2As_K$
Substituting in Eq. \eqref{eq:HACbound}:
\begin{align}
    |H_g-H| &= |\gamma^{K-1}(\hat{s}_{K-1}^TR\hat{s}_{K-1} - s_{K-1}^TRs_{K-1}) + \hat{\lambda}_K\hat{s}_{K} - \lambda_Ks_K)| \nonumber \\ 
    &\leq  |\gamma^{K-1}(\hat{s}_{K-1}^TR\hat{s}_{K-1} - s_{K-1}^TRs_{K-1})|  + |\hat{s}_{K}^TA \hat{s}_{K}- s_K^TAs_K|.
    \label{eq:qua}
\end{align}

Now compare the maximum critic estimation error in HAC $|H_g-H|$ and in MVE $|Q_{\phi} - Q|$ in all cases. For constant case, based on Eq. \eqref{eq:mveupper} and Eq. \eqref{eq:con}:
\begin{align}
    &\max |Q_{\phi} - Q| - \max|H_g-H| = \sum_{t=0}^{K-2}\gamma^{t} |\hat{s}_t^TR\hat{s}_t - s_t^TRs_t| + \gamma^{K}\epsilon > 0.
\end{align}

For linear and quadratic cases, suppose $r_T$ is also discounted by $\gamma$. For linear case, based on Eq. \eqref{eq:mveupper} and Eq. \eqref{eq:lin}:
\begin{align}
    &\max |Q_{\phi} - Q| - \max|H_g-H|  \nonumber \\
    &= \sum_{t=0}^{K-2}\gamma^{t} |\hat{s}_t^TR\hat{s}_t - s_t^TRs_t| + \gamma^{K}\epsilon+\gamma^{K}|r_T(\hat{s}_{K}) - r_T(s_K)| - a\delta \nonumber \\
    &= \sum_{t=0}^{K-2}\gamma^{t} |\hat{s}_t^TR\hat{s}_t - s_t^TRs_t| + \gamma^{K}\epsilon+ \gamma^{K}|A(\hat{s}_{K} - s_K)| - \gamma^{K}a\delta = 0 \nonumber \\
    &= \sum_{t=0}^{K-2}\gamma^{t} |\hat{s}_t^TR\hat{s}_t - s_t^TRs_t| +  \gamma^{K}\epsilon + \gamma^{K}a\delta - \gamma^{K}a\delta \nonumber \\
    &= \sum_{t=0}^{K-2}\gamma^{t} |\hat{s}_t^TR\hat{s}_t - s_t^TRs_t| + \gamma^{K}\epsilon>0.
\end{align}

For quadratic case, based on Eq. \eqref{eq:mveupper} and Eq. \eqref{eq:qua}:
\begin{align}
    &\max |Q_{\phi} - Q| - \max|H_g-H|  \nonumber \\
    &=\sum_{t=0}^{K-2}\gamma^{t} |\hat{s}_t^TR\hat{s}_t - s_t^TRs_t| + \gamma^{K}\epsilon+\gamma^{K}|r_T(\hat{s}_{K}) - r_T(s_K)| - \gamma^{K} |\hat{s}_{K}^TA \hat{s}_{K}- s_K^TAs_K| \nonumber  \\
    &= \sum_{t=0}^{K-2}\gamma^{t} |\hat{s}_t^TR\hat{s}_t - s_t^TRs_t| + \gamma^{K}\epsilon + \gamma^{K}|\hat{s}_K^TA\hat{s}_K - s_K^TAs_K| - \gamma^{K} |\hat{s}_{K}^TA \hat{s}_{K}- s_K^TAs_K| \nonumber \\
    &= \sum_{t=0}^{K-2}\gamma^{t} |\hat{s}_t^TR\hat{s}_t - s_t^TRs_t| + \gamma^{K}\epsilon>0.
\end{align}
Thus, we conclude:
\begin{equation}
    \max |Q_{\phi} - Q| - \max|H_g-H| >0.
\end{equation}
This holds under constant, linear, and quadratic terminal reward $r_T$.
\end{proof}

\subsection{Error Analysis of HAC when $K<T$}
\label{appendix::error_analyze}

For optimal control system: 
\begin{equation*}
    \begin{aligned}
        \min  \quad &J=\sum_{t=0}^{T-1} c(s_t, a_t) + \Phi(s_T)\\
       s.t. \quad &s_{t+1}=f(s_t, a_t).
    \end{aligned}
\end{equation*}
Considering the PMP conditions: 
\begin{equation}
    \lambda_t = \nabla_{s_t} H(s_t, a_t, \lambda_{t+1}) = \frac{\partial c(s_t, a_t)}{\partial s_t} + \lambda_{t+1}^\top \frac{\partial f(s_t, a_t)}{\partial s_t}.
    \label{eq::appendix_costate}
\end{equation}

In the ground-truth setting (where there is no dynamics error), PMP is applied over the entire $T$-step control horizon, and the resulting terminal state $\lambda^*_{T}$ is given by:
\begin{equation}
    \lambda^*_{T} = \frac{\partial \Phi(s_T)}{\partial s_T}.
\end{equation}

Hence, by applying Eq.~\eqref{eq::appendix_costate}, the ground-truth costate $\lambda^*_{t+K}$ can be computed by:
\begin{equation}
\begin{aligned}
\lambda^{*}_{t+K}=
\sum_{i=t+K}^{T-1}
\left(
\prod_{j=t+K}^{i-1}
\left( \frac{\partial f(s_j,a_j)}{\partial s_j} \right)^{\!\top}
\right)
\frac{\partial c(s_i,a_i)}{\partial s_i}  +
\left(
\prod_{j=t+K}^{T-1}
\left( \frac{\partial f(s_j,a_j)}{\partial s_j} \right)^{\!\top}
\right)
\lambda^*_{T}.
\end{aligned}
\label{eq::appendix_true}
\end{equation}

However, in HAC, PMP is applied over a $K$-step truncated imaginary rollout on $[t, t+K]$, yielding the estimated costate $\hat{\lambda}_{t+K}$ as:
\begin{equation}
    \hat{\lambda}_{t+K} = \frac{\partial \Phi(s_{t+K})}{\partial s_{t+K}}.
    \label{eq::appendix_estimate}
\end{equation}

This expression differs from the ground-truth costate $\lambda^{*}_{t+K}$ given in Eq.~\eqref{eq::appendix_true}. 
Since the Hamiltonian is defined as
$H(s_t,a_t,\lambda_{t+1}) = c(s_t,a_t) + \lambda_{t+1}^\top f(s_t,a_t)$
such errors in the estimated costate when $K<T$ directly induces an approximation error in the Hamiltonian used by HAC, hence can influence the performance of HAC.

However, in certain special cases, the influence of costate estimation error induced by the short-horizon $K$-step imaginary rollout can be substantially mitigated. 
Specifically, suppose that (i) the running cost is state-independent, i.e.,
$\frac{\partial c(s_i,a_i)}{\partial s_i} = 0$, and 
(ii) the terminal cost $\Phi(s_T)$ is linear, such that
$\frac{\partial \Phi(s_T)}{\partial s_T} = C$, where $C$ is a constant vector.
Under these assumptions, Eqs.~\eqref{eq::appendix_true} and \eqref{eq::appendix_estimate} yield:
\begin{equation*}
    \begin{aligned}
        \hat{\lambda}_{t+K} &= C \\
        \lambda^{*}_{t+K}&=\left(
\prod_{j=t+K}^{T-1}
\left( \frac{\partial f(s_j,a_j)}{\partial s_j} \right)^{\!\top}
\right)
C
    \end{aligned}
\end{equation*}

The ground-truth costate $\lambda^{*}{t+K}$ is proportional to the estimated costate $\hat{\lambda}{t+K}$. Specifically, we denote $\lambda^{*}_{t+K} = \beta \hat{\lambda}_{t+K} = \beta C$.Substituting this relation into Eq.~\eqref{eq::appendix_costate}, we obtain:
\begin{equation*}
\begin{aligned}
\hat{\lambda}_\tau
=
\left(
\prod_{i=\tau}^{t+K-1}
\left( \frac{\partial f(s_i,a_i)}{\partial s_i} \right)^{\!\top}
\right)
\hat{\lambda}_{t+K},
\qquad \tau \in [t, t+K]. \\
\lambda^{*}_{\tau}
=
\left(
\prod_{i=\tau}^{t+K-1}
\left( \frac{\partial f(s_i,a_i)}{\partial s_i} \right)^{\!\top}
\right)
\lambda^{*}_{t+K},
\qquad \tau \in [t, t+K]. 
\end{aligned}
\end{equation*}
According to PMP, the optimal condition is computed by: 
\begin{equation}
    \frac{\partial H(s_t, a_t, \lambda_{t+1})}{\partial a_t} = \frac{\partial c(s_t, a_t)}{\partial a_t} +  \lambda_{t+1}^T \frac{\partial f(s_t, a_t)}{\partial a_t}.
\end{equation}

Hence, the ground-truth optimal action $a_t^*$ and the estimated optimal action $\hat{a}_t$ obtained from the $K$-step rollout satisfy:
\begin{equation*}
    \begin{aligned}
        \frac{\partial c(s_t, a^*_t)}{\partial a^*_t} +  \lambda_{t+1}^{*T} \frac{\partial f(s_t, a^*_t)}{\partial a^*_t} = 0, \\
\frac{\partial c(s_t, \hat{a}_t)}{\partial \hat{a}_t} + \hat{\lambda}_{t+1}^{T} \frac{\partial f(s_t, \hat{a}_t)}{\partial \hat{a}_t} = 0.\\
    \end{aligned}
\end{equation*}
Therefore:
\begin{equation*}
    \begin{aligned}
        \frac{\partial c(s_\tau, a^*_\tau)}{\partial a^*_\tau} +  \beta C^T \frac{\partial f(s_\tau, a^*_\tau)}{\partial a^*_\tau} = 0, \qquad \tau \in [t, t+K-1].\\
\frac{\partial c(s_\tau, \hat{a}_\tau)}{\partial \hat{a}_\tau} + C^T \frac{\partial f(s_\tau, \hat{a}_\tau)}{\partial \hat{a}_\tau} = 0,\qquad \tau \in [t, t+K-1].\\
    \end{aligned}
\end{equation*}

Hence, when $\beta \approx 1$ (i.e., the system is \emph{marginally stable}), the error in the obtained optimal solution can be neglected.

\subsection{Potential Extension to Stochastic Dynamics Systems}
\label{app::stochastic}
The current formulation of HAC is designed for deterministic environments and is not directly applicable to stochastic dynamics. 
Nevertheless, the framework can be naturally extended to stochastic settings by reformulating the objective as the minimization of the expected cost,
\begin{equation}
\mathbb{E}_{\mathcal{D}}\!\left[
J = \sum_{t=0}^{T-1} c(s_t,a_t) + \Phi_T(s_T)
\right],
\end{equation}
where the expectation is approximated via Monte Carlo sampling over $N$ trajectories.

As shown in \textit{Lemma}~\ref{appendix::MPMP}, under this formulation the Mean Pontryagin Maximum Principle (MPMP) holds, in which the expected Hamiltonian
\begin{equation}
\mathbb{E}_D\!\left[ H(s_t,a_t,\lambda_{t+1}) \right]
\end{equation}
plays an analogous role to the Hamiltonian in the deterministic HAC framework. 
Consequently, our main theoretical results (\textit{Theorem}~\ref{theorem::Convergence} and~\ref{theorem::bound})will  extend to stochastic environments by replacing deterministic quantities with their Monte Carlo estimates, thereby preserving the core theoretical foundations of HAC.
\begin{lemma}
    (Discrete-time Mean Pontryagin Maximum Principles). Let $a^*_t$ be an optimal control that minimize the mean cost $\mathbb{E}_{\mathcal{D}} [J = \sum_{t=0}^{T-1} c(s_t, a_t) + \Phi_T(s_T)]$ (where $\mathcal{D}$ contains $i=1,...,N$ sample trajectories). Then a necessary condition for $a^*_t$ is: $\nabla_{a^*_t} \mathbb{E}_{D}[H(s^*_t, a^*_t, \lambda^*_{t+1})] = 0$.
    \label{appendix::MPMP}
\end{lemma}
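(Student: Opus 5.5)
The plan is to reduce the stochastic statement to $N$ deterministic problems, one per sampled trajectory, and then apply Lemma~\ref{lemma::1} to each of them. I will fix the randomness in each sample. Each trajectory $i=1,\dots,N$ in $\mathcal{D}$ is generated by $s^{i}_{t+1}=f(s^i_t,a_t,w^i_t)$ for a realized noise sequence $\{w^i_t\}$. Conditional on that realization, trajectory $i$ is a deterministic system of the form in Eq.~\eqref{eq:optimization}, so it has its own cost $J^i$, costates $\lambda^i_t$ given by the backward recursion in Eq.~\eqref{eq:costate}, and Hamiltonian $H^i(s^i_t,a_t,\lambda^i_{t+1})$. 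The Monte Carlo objective is then the finite average $\mathbb{E}_{\mathcal{D}}[J]=\frac{1}{N}\sum_{i=1}^N J^i$. Similarly, $\mathbb{E}_{\mathcal{D}}[H(s_t,a_t,\lambda_{t+1})]=\frac{1}{N}\sum_i H^i$.

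The key steps are as follows.
\begin{enumerate}
\item Apply Lemma~\ref{lemma::1} to each sample. Its telescoping of the costate recursion gives $\nabla_{a_t}J^i=\nabla_{a_t}H^i(s^i_t,a_t,\lambda^i_{t+1})$ for every $i$ and $t$. This uses only differentiability of $f(\cdot,\cdot,w)$, $c$ and $\Phi_T$ along each realized path.
\item Exchange the gradient with the sum. Since the average is finite, this is just linearity, and it yields
\begin{equation*}
\nabla_{a_t}\mathbb{E}_{\mathcal{D}}[J]=\frac{1}{N}\sum_{i=1}^N\nabla_{a_t}H^i=\nabla_{a_t}\mathbb{E}_{\mathcal{D}}[H(s_t,a_t,\lambda_{t+1})].
\end{equation*}
\item Invoke the first-order necessary condition. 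If $a^*_t$ minimizes $\mathbb{E}_{\mathcal{D}}[J]$ and lies in the interior of $\mathcal{A}$, then $\nabla_{a^*_t}\mathbb{E}_{\mathcal{D}}[J]=0$. Evaluating the identity from step 2 along the optimal trajectories $s^{*}_t$ with costates $\lambda^{*}_{t+1}$ then gives $\nabla_{a^*_t}\mathbb{E}_{\mathcal{D}}[H(s^*_t,a^*_t,\lambda^*_{t+1})]=0$.
\end{enumerate}

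The main obstacle is pinning down what the decision variable $a_t$ is when the trajectories differ. If $a_t$ is a single open-loop action shared across all samples, the argument above goes through directly.

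If instead the control is a feedback policy, $a^i_t=\pi_\theta(s^i_t)$, the actions differ across samples. In that case I would state the result in terms of $\theta$. Combining the chain rule with step 1 gives $\nabla_\theta\mathbb{E}_{\mathcal{D}}[J]=\frac{1}{N}\sum_{i,t}\nabla_{a}H^i\,\nabla_\theta\pi_\theta(s^i_t)$, exactly as in the proof of Theorem~\ref{theorem::Convergence}. There is a subtlety here: the costates must be computed with the dependence of $a^i_{t'}$ on $s^i_{t'}$ handled consistently. Either that dependence is treated as fixed, as Lemma~\ref{lemma::1} implicitly does, or the closed-loop Jacobian $\partial f/\partial s+\partial f/\partial a\,\partial\pi/\partial s$ is used in the recursion.

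I will adopt the open-loop reading to match the lemma's phrasing. I will also state explicitly the assumptions of an interior optimum and of differentiability for each realization of $w$.
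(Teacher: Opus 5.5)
Your proof is correct, but it takes a different route from the paper's. The paper does not invoke Lemma~\ref{lemma::1}. Instead it redoes the first-variation computation directly, in four steps:
\begin{itemize}
\item it perturbs each trajectory's control independently, $a^{i,\epsilon}_t = a^i_t + \epsilon\,\delta a^i_t$;
\item it propagates $ds^i_t/d\epsilon$ through the linearized dynamics;
\item it introduces per-trajectory costates to eliminate the state variations, arriving at $\left.\frac{dJ_N}{d\epsilon}\right|_{\epsilon=0} = \frac{1}{N}\sum_i\sum_t (\nabla_{a^i_t}\mathcal{H}^i)^\top \delta a^i_t$;
\item because the $\delta a^i_t$ are arbitrary and independent across $i$, it concludes the stronger per-sample condition $\nabla_{a^i_t}\mathcal{H}^i = 0$ for every $i$, and then averages.
\end{itemize}

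You instead fix one open-loop $a_t$ shared by all samples. You apply the per-sample identity of Lemma~\ref{lemma::1} and use linearity of the finite average. This gives only the averaged condition $\frac{1}{N}\sum_i \nabla_{a_t}H^i = 0$, which is exactly what the lemma asserts.

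The two readings genuinely differ. Under the paper's decoupled-controls reading, the problem splits into $N$ independent deterministic problems, so the mean condition is just an average of $N$ ordinary PMP conditions. Under your shared-action reading, the individual $\nabla_{a_t}H^i$ generally do not vanish, so only the mean condition survives.

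What each approach buys:
\begin{itemize}
\item Your route is shorter and modular. You also state the interior-optimum and per-realization differentiability hypotheses that the paper leaves implicit.
\item The paper's route is self-contained. It shows the adjoint elimination explicitly rather than inheriting it from Lemma~\ref{lemma::1}, whose printed proof has index slips, although the identity itself is standard.
\end{itemize}

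One point to tighten: conditioning on the noise makes each trajectory a time-varying system $s^i_{t+1}=f(s^i_t,a_t,w^i_t)$. That is not literally the time-invariant form in Eq.~\eqref{eq:optimization}. The telescoping argument behind Lemma~\ref{lemma::1} carries over unchanged, but you should say so.

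Your remark on feedback policies is apt. The costate recursion actually used in HAC, which keeps only the state Jacobian of $\hat f$, is exactly the open-loop treatment you adopt.
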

\begin{proof}
    We have $\mathbb{E}_{\mathcal{D}} [J = \sum_{t=0}^{T-1} c(s_t, a_t) + \Phi_T(s_T)] = \sum_{i=1}^{N} (\sum_{t=0}^{T-1} c(s^i_t, a^i_t) + \Phi_T(s^i_T))$.
    
   Consider a perturbation of the control $ a_t^{i,\epsilon} = a_t^i + \epsilon \delta a_t^i $ and define the perturbed cost:
   \begin{align}
       J_N^\epsilon = \frac{1}{N} \sum_{i=1}^N \left[ \sum_{t=0}^{T-1} c(s_t^{i,\epsilon}, a_t^{i,\epsilon}) + \Phi(s_T^{i,\epsilon}) \right]
   \end{align}

Differentiating \( J_N^\epsilon \) with respect to \( \epsilon \) at \( \epsilon = 0 \) gives:
\begin{align}
    \left. \frac{dJ_N}{d\epsilon} \right|_{\epsilon = 0} =
\frac{1}{N} \sum_{i=1}^N \left[ 
\sum_{t=0}^{T-1} \left(
\nabla_{s_t^i} c(s_t^i, a_t^i)^\top \frac{ds_t^i}{d\epsilon}
+ \nabla_{a_t^i} c(s_t^i, a_t^i)^\top \delta a_t^i
\right)
+ \nabla_{s_T^i} \Phi(s_T^i)^\top \frac{ds_T^i}{d\epsilon}
\right]
\end{align}

The state perturbation propagates forward via:
\begin{align}
    \frac{ds_{t+1}^i}{d\epsilon} = \frac{\partial f}{\partial s_t^i} \cdot \frac{ds_t^i}{d\epsilon}
+ \frac{\partial f}{\partial a_t^i} \cdot \delta a_t^i
\end{align}

Since Trajectory $i$ is deterministic, to eliminate \( \frac{ds_t^i}{d\epsilon} \), we can perform costates backward in time accorrding to Pontryagin Maximum Principles:
\begin{align}
    \lambda_T^i := \nabla_{s_T^i} \Phi(s_T^i), \quad 
\lambda_t^i := \left( \frac{\partial f}{\partial s_t^i} \right)^\top \lambda_{t+1}^i + \nabla_{s_t^i} c(s_t^i, a_t^i)
\end{align}

Substituting these into the variation yields:
\begin{align}
    \left. \frac{dJ_N}{d\epsilon} \right|_{\epsilon = 0} =
\frac{1}{N} \sum_{i=1}^N \sum_{t=0}^{T-1} 
\nabla_{a_t^i} \mathcal{H}^i(s_t^i, a_t^i, \lambda_{t+1}^i)^\top \delta a_t^i
\end{align}

Optimality requires $\left. \frac{dJ_N}{d\epsilon} \right|_{\epsilon = 0} =0$. Since the perturbation \( \delta a_t^i \) is arbitrary, we have:
\begin{equation}
    \nabla_{a_t^i} \mathcal{H}^i(s_t^i, a_t^i, \lambda_{t+1}^i) = 0
\end{equation}

Hence:
\begin{align}
    \nabla_{a^*_t} \mathbb{E}_{D}[H(s^*_t, a^*_t, \lambda^*_{t+1})] = 0
\end{align}

\end{proof}

\section{Experiment Settings}
\label{expset}
In this section, we detail the experimental settings used across all five environments. All experiments were conducted on a Google Colaboratory instance equipped with a T4 GPU and 12 GB of RAM. For consistency across environments, we employ a 3-layer Multi-Layer Perceptron (MLP) architecture (each hidden layer has 128 neurons) for the dynamics model, critic network, and policy agent learning.

\paragraph{LQR:} The Linear Quadratic Regulator (LQR) is a fundamental problem in optimal control theory. In our experiments, we implement a short-horizon 10-step LQR, with state dimension $m=5$ and action dimension $n=3$.  The ground-truth system dynamics follow a linear model: $s_{t+1} = As_t+Ba_t$, where we set $A=I$ (the identity matrix) and $ B= \begin{bmatrix}
        1 & 0 & 0 \\
        0 & 1 & 0\\
        0 & 0 & 1 \\
        1 & 1 & 0 \\
        0 & 1 & 1 \\
    \end{bmatrix}$). The running reward $r(s_t, a_t)$ is defined as a quadratic function: $-r(s_t, a_t) = s_t^TRs_t+a_t^TPa_t$, and the terminal reward $r_T$ is also quadratic: $-r_T(s_K) = s_K^TAs_K$. In our setup, we use identity matrices for both $R, P$ and set the terminal cost matrix $A=0.1I$. The initial state is given by $s_0 = [0, 1, 1, 0, 0]$ with added Gaussian noises. For offline RL settings, we created the pre-collected datasets with 5,000 random transition samples.

\paragraph{Pendulum:} We employ a short-horizon 10-step Pendulum environment with state dimension $m=2$ and action dimension $n=1$. The action $a_{t}$ represents the applied torque, while the 2-dimensional state is defined as $s_t=[q_t, dq_t]$ where $q_t$ denotes angular velocity and $dq_t$ denotes angular acceleration. The running reward $r(s_t, a_t)$ is defined as quadratic: $-r(s_t, a_t) = w_q(q_t-q_f)^2+w_{dq}(dq_t-dq_f)^2+w_ua_t^2$, where $q_f$ and $dq_f$ are target angular velocity and acceleration for $s_t$ and $w_q=10$, $w_{dq}=1$, $w_u=0.1$ are corresponding cost weights. The terminal reward $r_T$ is also quadratic: $-r_T(s_K) = w_q(q_K-q_f)^2+w_{dq}(dq_K-dq_f)^2$. The system dynamics $f(s_{t}, a_{t})$ are derived from \textit{Newton’s Second Law of Rotation} and defined as:
\begin{equation}
    f(s_{t}, a_{t}) = s_t + \Delta [dq_t, \frac{a_t-mglq_t-\sigma sin(q_t)}{I}];
\end{equation}
where $m$ and $l$ denotes denote the pendulum’s mass and length; $g$ is the gravitational constant; $\sigma$ is the damping ratio; $\Delta$ is the time step length, and $I=\frac{1}{3}mgl^2$ is the pendulum's moment of inertia. The initial state is given by $s_0=[0, 0]$ with added Gaussian noises. Compared to the LQR environment, Pendulum exhibits greater dynamic complexity due to the presence of nonlinear trigonometric terms in dynamics, which increase the difficulty of achieving optimal solution.  For offline RL settings, we created the pre-collected datasets with 20,000 random transition samples.

\paragraph{MountainCar:} We evaluate our method on the long-horizon \textit{MountainCar} environment from \textit{Gymnasium}, setting the episode length to be 200 timesteps. \textit{MountainCar} is a deterministic Markov decision process in which a car is initialized stochastically at the bottom of a sinusoidal valley \cite{Moore90efficientmemory-based}. The agent controls the car using a continuous one-dimensional acceleration action ($n=1$). The state space has dimension $m=2$, where the first state component $s_t[0]$ denotes the car’s horizontal (x-axis) position. The objective is to accelerate strategically so as to reach the goal position at $x=0.45$ on the right hill. To align with the assumptions of \textit{Theorem}~\ref{theorem::bound}, we adopt the continuous-action variant and modify the reward structure by using a quadratic running cost $r(s_t,a_t)=-w_a a_t^\top a_t$ with $w_a=0.1$, and a linear terminal reward $r_T(s_K)=w_s\bigl(s_K[0]-0.45\bigr)$ with $w_s=100$.  For offline RL settings, we created the pre-collected datasets with 200,000 random transition samples.

\paragraph{Swimmer:} We evaluate our method on the long-horizon \textit{MuJoCo} \textit{Swimmer} environment, configured with a 1000-step episode length. The environment has a state dimension $m=10$ and an action dimension $n=2$. The first element of state $s_t[0]$ represents the robot's x-coordinate (position). The ground-truth system dynamics are not explicitly provided in this environment. The running reward $r(s_t, a_t)$ is defined as a quadratic penalty on the control effort $r(s_t, a_t) =- w_ca^T_ta_t$ and the terminal reward is defined as a linear function: $r_T(s_K) = w_f(s_K[0]-s_0[0])$, where $w_c=10^{-4}$ and $w_f = 100$ are predefined reward weights. This environment serves to evaluate the performance of our HAC method in high-dimensional, long-horizon control tasks.  For offline RL settings, we created the pre-collected datasets with 500,000 random transition samples.

\paragraph{Hopper:} Consistent with the setup in the \textit{Swimmer} environment, we also evaluate our method on the \textit{MuJoCo} \textit{Hopper} environment with a long-horizon of 1000 timesteps, state dimension $m=12$, and action dimension $n=3$. The first element of state $s_t[0]$ denotes the robot's x-coordinate (position). As with \textit{Swimmer}, the true system dynamics are not explicitly disclosed. The terminal reward retains a linear structure: $r_T(s_K) = w_f(s_K[0]-s_0[0])$, where $w_f = 500$ is a predefined weight that incentivizes forward motion. However, in contrast to \textit{Swimmer}, the running reward $r(s_t, a_t)$ incorporates two components: a quadratic penalty $- w_ca^T_ta_t$ on the control input and a binary "healthy reward" $He$: $r(s_t, a_t) = - w_ca^T_ta_t + He$, where $w_c=10^{-3}$. The term $He$ is a health indicator: the agent receives $He=1$ if the system remains within a designated safe operational domain (e.g. all components in $s_t$ lie within the interval $[-100, 100]$); Otherwise, the episode terminates immediately. This environment is used to assess the effectiveness of our HAC method in settings that deviate from the assumptions outlined in \textit{Theorem}~\ref{theorem::bound}.  For offline RL settings, we created the pre-collected datasets with 500,000 random transition samples.

\paragraph{LQR OOD Simulation Settings:} We conduct a simple experiment to evaluate the robustness of our HAC to out-of-distribution (OOD) issues on the 10-step LQR environment, with state dimension $m=5$ and action dimension $n=3$. During critic training, the initial state is set to $[0, 1, 1, 0, 0]$ with added noise, while during testing, it is shifted to $[0, 0, 1, 1, 0]$, also with noise. We compare the optimal trajectories generated by MVE-DDPG and our HAC method under both the original and shifted initial state conditions. 

\section{Additional Experiment Results}
\label{abl}

\subsection{Critic Estimation Error} 
\label{app::critic_error}
We evaluate the critic estimation error of HAC to assess whether it improves critic accuracy compared to MVE. Following the evaluation protocol of MVE-DDPG, we replicate the critic estimation plots for both methods on the LQR task. In this setting, the ground-truth Q-function $Q_t(s_t, a_t)$ and Hamiltonian $H(s_t, a_t, \lambda_{t+1})$ can be computed analytically, enabling precise error quantification.

For the standard LQR problem:
\begin{equation*}
\begin{aligned}
\min_{\{a_t\}_{t=0}^{T-1}} \quad
& J
= \sum_{t=0}^{T-1}
\left(
s_t^\top U s_t + a_t^\top R a_t
\right)
+ s_T^\top U_f s_T, \\
\text{s.t.} \quad
& s_{t+1} = A s_t + B a_t, \qquad x_0 \ \text{given}, \\
& U \succeq 0,\; R \succ 0,\; U_f \succeq 0 .
\end{aligned}
\end{equation*}

According to the \textit{Algebraic Riccati Equation}, we have the ground-truth Q function $Q_t(s_t, a_t)$:
\begin{equation*}
    \begin{aligned}
V_t(s) &= s^\top P_t s, \\[6pt]
Q_t(s,a)
&= s_t^\top U s_t + a_t^\top R a_t
+ (A s_t + B a_t)^\top P_{t+1} (A s_t + B a_t).
\end{aligned}
\end{equation*}

while the ground-truth Hamiltonian equals to:
\begin{equation*}
    \begin{aligned}
    \lambda_t
&= \frac{\partial}{\partial s_t}
\left(
s_t^\top P_t s_t
\right)
= 2 P_t s_t,\\
H(s_t, a_t, \lambda_{t+1})
&= \min_{u_t}
\left[
s_t^\top U st
+ a_t^\top R a_t
+ (A s_t + B a_t)^\top P_{t+1} (A s_t + B a_t)
\right] \\[6pt]
&= x_t^\top
\left(
U
+ A^\top P_{t+1} A
- A^\top P_{t+1} B
\left( R + B^\top P_{t+1} B \right)^{-1}
B^\top P_{t+1} A
\right)
s_t .
    \end{aligned}
\end{equation*}
where $P_t$ satisfies:
\begin{equation*}
    \begin{aligned}
P_t
= U
+ A^\top P_{t+1} A
- A^\top P_{t+1} B
\left( R + B^\top P_{t+1} B \right)^{-1}
B^\top P_{t+1} A ,
\qquad P_T = U_f .
    \end{aligned}
\end{equation*}

Based on these analytical expressions, we compute the estimation error with respect to the ground-truth Q-function and Hamiltonian in the LQR environment. We collect 10 randomly generated trajectories and, for each transition $(s_t, a_t)$, compute the normalized estimated Q-value $\hat{Q}(s_t, a_t)$ and the corresponding normalized ground-truth value $Q^*(s_t, a_t)$ for MVE-DDPG. Similarly, for HAC, we compute the normalized estimated Hamiltonian $\hat{H}(s_t, a_t, \lambda_{t+1})$ and the ground-truth Hamiltonian $H^*(s_t, a_t, \lambda_{t+1})$. Each transition is plotted as a point with $(Q^*, \hat{Q})$ or $(H^*, \hat{H})$ as the x- and y-coordinates, respectively. The resulting comparisons are shown in Fig.~\ref{fig:CriticErrors}.

\begin{figure}[h]
    \centering
    \begin{subfigure}[b]{0.5\linewidth}
        \centering
        \includegraphics[width=\linewidth]{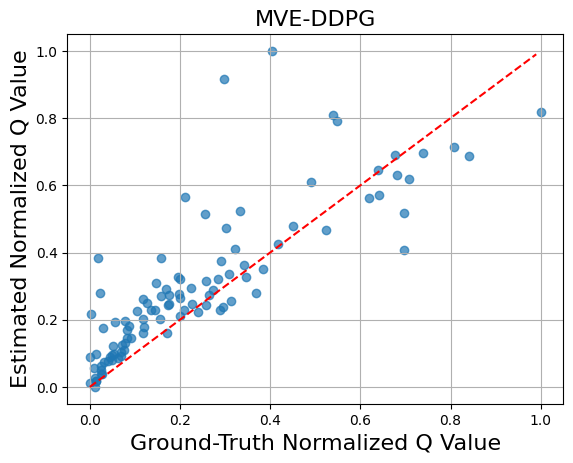}
        \caption{Q-error in MVE-DDPG}
        \label{fig:QError}
    \end{subfigure}%
    \hfill
    \begin{subfigure}[b]{0.5\linewidth}
        \centering
        \includegraphics[width=\linewidth]{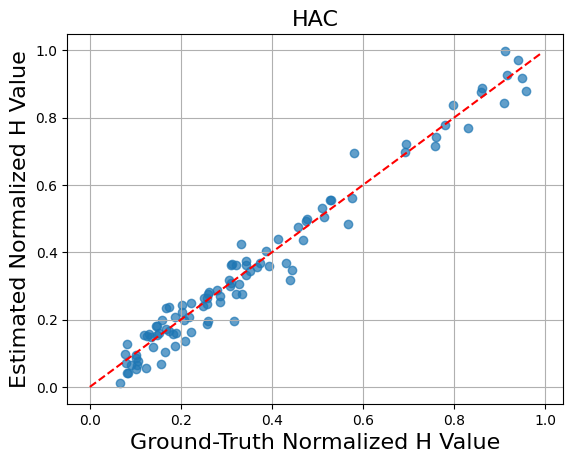}
        \caption{H-error in HAC.}
        \label{fig:HError}
    \end{subfigure}
    \caption{\textbf{Critic Estimation Comparison:} Plots comparing normalized estimated Q-values (in MVE-DDPG) and Hamiltonian (in our HAC) against ground-truth sampled from 10 random test episodes in LQR. An ideal critic should concentrates on the red unity line.}
    \label{fig:CriticErrors}
\end{figure} 

Since an ideal critic would yield estimates lying exactly on the unity line (i.e., $Q^* = \hat{Q}$ and $H^* = \hat{H}$), the results empirically support the bound in Theorem~\ref{theorem::bound}. In particular, the distribution of test points $(H^*, \hat{H})$ produced by HAC is more tightly concentrated around the unity line, indicating that HAC achieves more accurate critic estimation.

\subsection{Ablation Study on Ensemble Number} \textit{Theorem}~\ref{theorem::bound} demonstrates that the critic estimation error in MVE is consistently larger than that in our proposed HAC method, which can negatively impact the performance of MVE-based algorithms. A widely adopted strategy to address this limitation in MVE is to use an ensemble of critic networks. By introducing $N$ critic functions and minimizing over their outputs, the critic estimation becomes more accurate and helps mitigate the overestimation bias, thereby supporting state-of-the-art performance \cite{chen2021randomized}.   This trick has been successfully employed in SAC \cite{haarnoja2018soft}, where a lightweight ensemble of $N=2$ Q-networks is used, with value estimation taken as the minimum across them. Similarly, enhancing MVE-DDPG by simply incorporating such critic ensembles can improve its performance to a level comparable with our HAC method.

While ensemble critic networks can boost MVE-DDPG's performance, they significantly increase computational cost. Each critic in the ensemble requires updating independently, which substantially increases training time~\cite{nikulin2022q}. Similarly, HAC also incurs additional computational overhead due to its backward recursive costate computation, in contrast to MVE’s simpler forward-pass for multi-step return estimation. In this ablation study, we aim to assess the computational efficiency of our HAC method in comparison to MVE-DDPG enhanced with critic ensembles. Specifically, we seek to answer two key questions: (1) How many ensemble number is required for MVE-DDPG to match the performance of HAC? (2) Given similar performance levels, how much computational time can HAC save? We evaluate these questions on the long-horizon task (\textit{Swimmer}). Following prior works~\cite{huang2017learning, parkmodel, nikulin2022q}, we vary the number of critic networks in the ensemble from $N=2$ to $N=5$. Computational efficiency is measured in terms of the average time per gradient step during training, consistent with the evaluation index mentioned in~\cite{nikulin2022q}.

\begin{table}[h!]
\centering
\begin{tabular}{c|c|c|c|c}
\hline
\textbf{Ensemble number} & \multicolumn{2}{|c|}{\textbf{Performance}} & \multicolumn{2}{|c}{\textbf{Computation time per gradient step (s)}} \\
\cline{2-5}
                         & \textbf{MVE-DDPG} & \textbf{HAC} & \textbf{MVE-DDPG} & \textbf{HAC} \\
\hline
None                     & 37.491           & \multirow{3}{*}{44.366} & 0.011             & \multirow{3}{*}{0.017} \\
\cline{1-2} \cline{4-4}
3                        & 41.076          &                          & 0.021             &                          \\
\cline{1-2} \cline{4-4}
5                        & 45.923           &                          & 0.030             &                          \\
\hline
\end{tabular}
\caption{Performance and computation time of MVE-DDPG with varying ensemble sizes on the \textit{Swimmer} environment, averaged over 5 random seeds. An ensemble of 5 critic networks enables MVE-DDPG to achieve performance comparable to HAC.}
\label{tab:EnSwimmer}
\end{table}

Table ~\ref{tab:EnSwimmer} present the results of ablation studies conducted on \textit{Swimmer} environment. These results demonstrate that HAC achieves competitive performance when compared to MVE-DDPG with ensemble critic sizes of 3 and 5. Although HAC requires more computation per gradient step due to the backward costate calculation, it offers overall computational efficiency. Specifically, while MVE-DDPG relies on multiple critic networks to stabilize learning, HAC avoids this overhead, thus reducing total training time. This advantage becomes increasingly significant in complex, high-dimensional environments, where accurately learning the critic distribution necessitates larger ensemble sizes for MVE-DDPG.


\end{document}